\newtheorem{Proposition}{Proposition}
\newtheorem{Corollary }{Corollary}
\def\BibTeX{{\rm B\kern-.05em{\sc i\kern-.025em b}\kern-.08em T\kern-.1667em\lower.7ex\hbox{E}\kern-.125emX}}
\begin{document}

\title{Link Budget Analysis for Reconfigurable Smart Surfaces in Aerial Platforms}

\author{Safwan Alfattani, Wael Jaafar, Yassine Hmamouche, Halim Yanikomeroglu, and Abbas Yongaçoglu 
\thanks{This work is supported in part by a
scholarship from King AbdulAziz University, Saudi Arabia, in part by
the National Sciences and Engineering Research Council of Canada
(NSERC), and in part by Huawei Canada.}
\thanks{S. Alfattani is with the Electrical Engineering Department of King Abdulaziz University, Saudi Arabia, and with the
	School of Electrical Engineering and Computer Science,
	University of Ottawa, Ottawa, ON, Canada, (e-mail: smalfattani@kau.edu.sa).}
\thanks{W. Jaafar and H. Yanikomeroglu are with the Department of Systems and
Computer Engineering, Carleton University, Ottawa, ON, Canada, e-mails:
{(waeljaafar, halim}@sce.carleton.ca)}
\thanks{Y. Hmamouche is with the Mathematical and Electrical Engineering Department, IMT-Atlantique, Brest, France,  (e-mail: yassine.hmamouche@imt-atlantique.fr).}
\thanks{A. Yongacoglu is with the
 School of Electrical Engineering and Computer Science, University of Ottawa, Ottawa, ON, Canada 
 (e-mail: yongac@uottawa.ca).}}

\IEEEtitleabstractindextext{\begin{abstract}
Non-terrestrial networks, including Unmanned Aerial Vehicles (UAVs), High Altitude Platform Station (HAPS) \textcolor{black}{nodes} and \textcolor{black}{Low Earth Orbiting (LEO) satellites,} are expected to have a pivotal role 
\textcolor{black}{in sixth-generation wireless networks.}
With 
inherent features such as flexible placement, wide footprints, and preferred channel conditions, they can tackle several challenges 
\textcolor{black}{faced by current terrestrial networks.}
However, their successful and widespread adoption relies on 
energy-efficient on-board communication systems.
In this context, the integration of Reconfigurable Smart Surfaces (RSS) into aerial platforms is envisioned as a key enabler of energy-efficient and cost-effective 
\textcolor{black}{aerial platform deployments.}
 RSS consist of low-cost reflectors capable of smartly directing signals in a nearly passive way. 
\textcolor{black}{In this paper, we investigate} the link budget of RSS-assisted communications 
\textcolor{black}{for two RSS reflection paradigms discussed in the literature, namely ``specular'' and ``scattering'' paradigms.}
 Specifically, we analyze the characteristics of RSS-equipped aerial platforms and compare their communication performance with that of RSS-assisted terrestrial networks using standardized channel models. In addition, we derive the 
 \textcolor{black}{optimal aerial platform placements for both} reflection paradigms.
\textcolor{black}{Our} results provide important insights for the design of RSS-assisted communications. For instance, given that a HAPS has a large 
\textcolor{black}{area for RSS,} it provides superior link budget performance 
\textcolor{black}{in most studied scenarios.} 
In contrast, the limited RSS area on UAVs and the large propagation loss in \textcolor{black}{LEO satellite} communications make them unfavorable candidates for supporting terrestrial users. Finally, the optimal location of an RSS-equipped platform may depend on the platform's altitude, coverage footprint, and type of environment.

\end{abstract}

\begin{IEEEkeywords}
Reconfigurable Smart Surfaces (RSS), Reconfigurable Intelligent Surfaces (RIS), aerial platform, Unmanned Aerial Vehicle (UAV),  High Altitude Platform Station (HAPS), Low Earth Orbit (LEO) satellite.
\end{IEEEkeywords}

}

\maketitle

\section{Introduction}
As the fifth generation (5G) of wireless systems \textcolor{black}{are} being actively deployed, researchers in the wireless community \textcolor{black}{have} started investigating new technologies and innovative solutions to tackle the challenges and fulfill the demands of 
\textcolor{black}{of next-generation (6G) networks.}
One of the main 
\textcolor{black}{challenges involves supporting ubiquitous connectivity with high data rates in an energy efficient way.} 
With the inherent limitations of terrestrial environments, non-terrestrial networks are 
envisioned as an enabling technology for ubiquitous connectivity in future wireless communications. 
\textcolor{black}{Non-terrestrial networks including such platforms  as  Unmanned Aerial Vehicles (UAVs), High Altitude Platform Stations (HAPS) nodes, and Low Earth Orbit (LEO) satellites are capable of addressing such challenges  as coverage holes, blind spots, sudden increases in throughput demand, and terrestrial network failures. They can address these challenges  due to their  wider coverage footprints, strong line-of-sight (LoS) links, and flexibility of deployment compared to terrestrial
networks \cite{alfattani2020aerial,kurt2020vision,rinaldi2020non,kodheli2020satellite,zeng2019accessing}.} Moreover, the standardization efforts of the  Third Generation Partnership Project (3GPP) aiming to utilize aerial platforms for 5G 
\textcolor{black}{and beyond have made significant progress, as}
demonstrated by the standardization documents TR 38.811 \cite{3gpp2017Technical}, TR 22.829 \cite{3gpp2017Technical_2}, and TS 22.125 \cite{3gpp2017Technical_3}. Furthermore, several commercial projects are either in their initial phases of deployment or under development, which aim to design different types of aerial platforms capable of supporting wireless communications. Such projects include the Starlink LEO constellation by SpaceX \cite{starlink}, the Stratobus HAPS by Thales \cite{thales}, and the Nokia Drone Networks \cite{Nokia}.
Nevertheless, aerial platforms 
\textcolor{black}{are not yet a cutting-edge technology,}
and their current size, weight, and power (SWAP) limitations need to be further improved. 

On the other hand, 
reconfigurable smart surfaces (RSS) 
\textcolor{black}{have recently been introduced as an energy-efficient enabling technology}
for next-generation wireless networks \cite{di2019smart}.\textcolor{black}{\footnote{\textcolor{black}{RSS are referred to in the literature by other names, such as software-controlled metasurfaces}
\cite{liaskos2018new}, reconfigurable intelligent surfaces (RIS) \cite{Basar2019},  intelligent reflecting surfaces (IRS) \cite{Wu2019} \textcolor{black}{and} smart reflect-arrays \cite{Tan2018}.}} An RSS is a thin, lightweight metasurface
integrated with passive electronic components or switches to provide  unique and controlled manipulation of the wireless signals. It can alter the amplitude of the impinging signal, adjust its phase, and direct it to a target in a nearly passive way \cite{liaskos2018new,Basar2019}.
The deployment and utilization of RSS in terrestrial networks has been extensively studied, and 
\textcolor{black}{several research works, prototypes, \textcolor{black}{and} industrial experiments, summarized in \cite{wu2020intelligent, gong2020toward}, demonstrated the potential of this technology.}


Given the potential spectral and energy efficiencies of RSS-assisted communications and the stringent energy requirements of communications through  aerial platforms, equipping the latter with RSS 
\textcolor{black}{presents} an attractive solution to the SWAP issue. Indeed, due to the low-cost and negligible energy consumption of RSS reflectors, their use in aerial platforms is expected to support low-cost wireless communications for an extended flight duration. 
\textcolor{black}{In our previous work }\cite{alfattani2020aerial}, we discussed the feasibility of integrating RSS 
\textcolor{black}{in aerial platforms of different types.}
We 
\textcolor{black}{proposed a control architecture, detailed potential use cases, and examined associated challenges.}
In the context of UAVs only, 
the authors of \cite{lu2020enabling} showed that using RSS in UAVs enables a panoramic view of the environment, which can provide full-angle $360\degree$ signal reflections compared to $180\degree$ reflections in RSS-assisted terrestrial networks.
In \cite{abdalla2020uavs}, the authors presented potential use cases of RSS mounted 
\textcolor{black}{on UAVs and discussed}  related challenges and research opportunities. Similarly, \cite{shang2021uav} investigated the potential of RSS-equipped UAV swarms, where a use case was studied to demonstrate the achievable data rate performance of such systems.
The authors of \cite{samir2020optimizing} 
\textcolor{black}{studied the problem of wireless sensor data collection, where sensors were assisted by an RSS mounted on a UAV to reach the collecting sink.}
The objective was to maintain data freshness through accurate optimization of the UAV's location and the RSS phase-shifting configuration. Finally, in the context of LEO satellites,
the authors of \cite{tekb2020reconfigurable} investigated the utilization of RSS to support inter-satellite links in the  terahertz (THz) band. The results demonstrated a significant performance improvement in terms of bit error rate compared to non-RSS-assisted communications.

\textcolor{black}{Previous works have not thoroughly investigated RSS-enabled communication links, and so a link budget analysis for RSS-assisted non-terrestrial networks remains unexamined.}
\textcolor{black}{In contrast, a number of works have studied path-loss models for RSS-assisted terrestrial communications}
\cite{Basar2019,Tang2019,yildirim2020modeling,Ozdogan2019,nadeem2019intelligent}. While most of these models are  based on mathematical analysis using different approaches, some \textcolor{black}{of them have been} experimentally validated \cite{Tang2019}. 
These studies revealed the existence of two regimes that 
govern the performance of RSS-assisted communication systems. The first  is the ``specular'' reflection paradigm, where the path-loss model is analyzed using geometrical optics and imaging theory\textcolor{black}{. The} second is the ``scattering'' reflection paradigm, which obeys 
plate scattering theory and radar cross-section analysis. 
The factors that determine the governing regime of the RSS-assisted systems are the geometrical size of the RSS units, the communication frequency, and the distances separating an RSS from the transmitter and receiver.
Typically, 
\textcolor{black}{when an RSS is within a relatively short distance from a transmitter and/or a}  
\textcolor{black}{receiver, or when the RSS units}
are electrically 
\textcolor{black}{large (e.g., their dimensions are ten times larger than the wavelength denoted by $\lambda$), the path loss is governed by the specular reflection paradigm}
\cite{Basar2019,Tang2019,Ntontin2019a}. 
\textcolor{black}{Otherwise, the RSS-assisted communication follows the plate scattering reflection paradigm (i.e., when the distances between the RSS units and transmitter or receiver are large or when the RSS unit dimensions are very small)}
\cite{basar2020simris,ellingson2019path}. 
\textcolor{black}{It should be noted that}
the scattering paradigm can be designated as \textcolor{black}{``far-field'' paradigm,} whereas the specular reflection 
\textcolor{black}{can be referred to as ``near-field'' paradigm.}

Due to the specific design and environmental characteristics of aerial platforms compared to terrestrial systems, 
\textcolor{black}{the former, when equipped with an RSS, are expected to have a different link budget analysis.}
\textcolor{black}{Therefore, it is necessary to investigate and assess the benefits of RSS-enabled aerial platforms.}
Three major factors impact the feasibility of RSS-enabled aerial platforms, namely the operating frequency or wavelength, the platform’s surface area reserved for RSS units, and the operating altitude. 
While higher frequency signals are preferable for larger capacity links and enabling the deployment of more RSS reflector units, 
\textcolor{black}{higher frequency signals}
are more vulnerable to path-loss degradation 
from the communication distance and atmospheric attenuation. Also, larger platform RSS sizes may lead to a higher reflection gain, which may not be realizable for 
practical platform sizes.
Finally, 
although platforms operating at higher altitudes might be preferable due to their larger coverage footprint, they suffer from excessive propagation losses that may not be compensated \textcolor{black}{for even 
large area of RSS.}


In this paper, we aim to provide the link budget analysis for RSS-enabled aerial platform communication systems 
\textcolor{black}{for specular and scattering reflection}
paradigms. The received power of the system, for different RSS-enabled platforms, is calculated while taking into account the signal strength losses due to the specific characteristics of each platform. To minimize signal loss, we derive the optimal platform location and 
\textcolor{black}{maximum feasible number}
of RSS reflectors on-board  each platform. Link budget expressions are then derived for realistic communication conditions, as 
\textcolor{black}{defined by 3GPP standards.}
Finally, numerical results are provided to support the proposed link budget analysis. The contributions of the paper are highlighted as follows: 
\begin{enumerate}
    \item To the best of our knowledge, the performance parameters of RSS-enabled communications have been derived only for terrestrial networks and not for non-terrestrial systems, where different signal losses are experienced due to specific characteristics of aerial platforms and atmospheric phenomena. The impact of these factors is taken into account in this work.
    
    \item 
    We investigate the link budget analysis for RSS-enabled aerial platform communications 
    \textcolor{black}{for different platform types,} namely UAVs, HAPS nodes, and LEO satellites. 
    \textcolor{black}{To improve performance, we optimize platform locations and the maximum feasible number of mounted RSS reflectors.}
    
    \item We extend the link budget analysis to more realistic channel conditions, as defined in the 3GPP standards. 
    \textcolor{black}{We also provide numerical results to support the related parameters evaluation and the link budget analysis.
 }
\end{enumerate}

The rest of the paper is organized as follows. Section \ref{sec:budget_terr} discusses the conditions of the reflection and scattering paradigms, and then analyzes the link budget for RSS-assisted terrestrial networks.  Section \ref{sec:budget_non_terr} then exposes the characteristics of aerial platforms and derives the optimal platform placement for RSS-assisted aerial 
\textcolor{black}{communications. This section also investigates the related link budget for both reflection paradigms.}
Numerical results for the terrestrial and non-terrestrial RSS-assisted systems are presented and elaborated in Section \ref{sec:results}. Finally, Section \ref{sec:conc} 
concludes the paper.

\section{Link Budget Analysis for RSS-Assisted Terrestrial Networks}\label{sec:budget_terr}

\textcolor{black}{Here,} we present the link budget analysis for RSS-enabled terrestrial communications 
\textcolor{black}{(e.g., RSS units mounted on a building).}
Given the identified specular and scattering reflection paradigms, we derive the related received power expressions. 

Typically, terrestrial environments are characterized by 
blockages that result in high path loss, especially in dense-urban and urban environments. 
Accordingly, terrestrial network planning \textcolor{black}{depends} on cellular densification, where multiple base stations (BSs) are deployed in a relatively small area to ensure coverage of all users within the area. 
\textcolor{black}{But this comes at the expense of additional costs and inter-cell interference.}
To alleviate such inconveniences, the RSS can be deployed on the facades of buildings and used to either extend the cellular coverage or improve the signal quality in poorly served areas. As shown in Fig. \ref{Fig:terrestrial_RSS}, 
\textcolor{black}{the signal forwarded by the RSS from the BS to the user equipment (UE) can either}
substitute the direct link when the latter is absent, or it can  be added constructively  to the weak direct link in order to strengthen the received signal.

\begin{figure}[t]
	\centering
	\includegraphics[width=\linewidth]{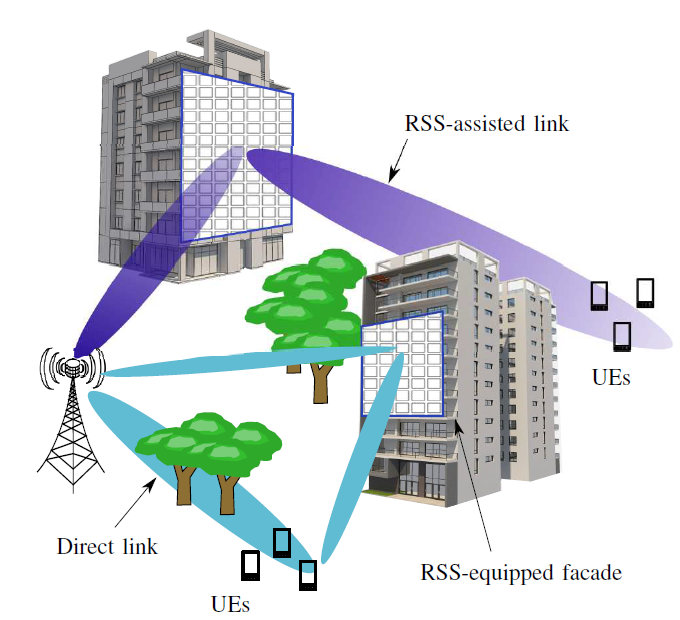}
	\caption{System model of RSS in a terrestrial environment. }
	\label{Fig:terrestrial_RSS}
\end{figure}

\subsection{The Specular Reflection Paradigm}
\textcolor{black}{The relation that governs the paradigm}
of the RSS-assisted communications has been defined in \cite{Tang2019,Gab2020} as follows:
\begin{equation}\label{Eq:dis_lim}
d_{\rm lim} = \dfrac{2 A_t}{\lambda},
\end{equation}
\textcolor{black}{where $d_{\rm lim}$   denotes the maximum distance between
the RSS and either the transmitter (Tx) or the receiver (Rx)
in the specular reflection paradigm\footnote{An example of (Tx,Rx) in the terrestrial environment is (BS,UE).}, and  $A_t$ is the total RSS area.}

When the length and width dimensions of the RSS units are large enough (i.e., above $10 \lambda$), and the distance separating the RSS from the Tx/Rx is less than $d_{\rm lim}$, then the RSS can be considered in the near-field. In this paradigm, the impinging spherical wave forms a circular and divergent phase gradient on the RSS 
\textcolor{black}{area.} Accordingly, the RSS acts as an anomalous mirror and the two-hop link acts as a one-hop path. Hence, the distance path loss is affected by the summation of the traveled distances (i.e., Tx-RSS and RSS-Rx distances), which is known as the specular reflection paradigm \cite{Basar2019, Tang2019,Ntontin2019a}.

For a Tx-RSS (resp. RSS-Rx) distance $D \leq d_{\rm lim}$ and a large-sized RSS (LRSS), 
where the reflector unit sizes are $10 \lambda \times 10\lambda$ m$^2$, the minimum required number of reflectors for specular reflection, denoted $N_{\min}$, can be calculated as
\begin{equation}
\label{eq:Nmin}
D=\frac{2 A_t}{\lambda}= \frac{2 N_{\min} \left( 10 \lambda \right)^2}{\lambda}= 200 \lambda N_{\min} \Leftrightarrow N_{\min} = \dfrac{D}{200 \lambda}.
\end{equation}
The defined $N_{\min}$ will be later used to assess the feasibility of the RSS using the specular reflection paradigm.


In order to conduct the link budget analysis in this paradigm, we assume the Tx-Rx communication assisted by a building-mounted LRSS. Let $x(t)$ be the transmitted signal by the Tx. 
Then, the received (noise-free) signal at the Rx, denoted by $y(t)$, can be written as \cite{Basar2019} 
\begin{equation}
\label{eq:received}
y(t) = a \; x(t), 
\end{equation}
where $a$ is the wireless channel coefficient, 
expressed for the sake of simplicity with the log-distance path-loss model. The latter is given by
\begin{equation}
a = \sqrt{P_t G_t G_r} \left( \dfrac{\lambda}{4\pi d_0} \right) \left( \left(\dfrac{d_0}{d_l}\right)^\gamma + \sum_{i=1}^{N} \dfrac{ d_0^\gamma \rho_i e^{-j(\theta_i + \phi_i)}}{\left(d_{ti}+d_{ir}\right)^\gamma}\right), 
\end{equation}  
where $P_t$, $G_t$, and $G_r$ are the transmit power and the transmitter and receiver gains, respectively. Also, $d_0$ denotes the reference distance, $d_l$ is the distance between Tx and Rx,\footnote{We assume here that the direct link Tx-Rx is a weak link \cite{bjornson2021reconfigurable}.
} $2\gamma = \alpha$ is the path-loss exponent, and $N$ is the total number of RSS reflector units. Finally, $\rho_i$, $d_{ti}$, $d_{ir}$ \textcolor{black}{represent the reflection loss of the $i^{th}$ RSS reflector, the distance between Tx and RSS $i^{th}$ reflector, and distance between the RSS $i^{th}$ reflector and Rx, respectively. Also,  $\theta_i$ and $\phi_i$ are the corresponding incident and reflection angles.}  
\textcolor{black}{Now, the received power at Rx, denoted by} $P_r$, can be written as 
\begin{equation}
\label{eq3}
P_r =P_t G_t G_r \left( \dfrac{\lambda \; d_0^{\left(\gamma-1 \right)}}{4\pi }\right) ^2  \left( \dfrac{1}{d_l^\gamma} + \sum_{i=1}^{N}\dfrac{\rho_i e^{-j(\theta_i + \phi_i)}}{\left(d_{ti}+d_{ir}\right)^\gamma}\right)^2.
\end{equation}
For the sake of simplicity, we assume here that the LRSS 
can perfectly adjust the desired phase shifts, and that the reflectors are ideal without any \textcolor{black}{reflection loss, that is,}\footnote{\textcolor{black}{In practice, RSS reflection loss depends on  the configuration technology and building materials \cite{Hum2014,wu2020intelligent}. Also, since continuous phase shift implementation is difficult, only a finite discrete set of phase shifts is typically designed. It \textcolor{black}{has been shown}  that near-optimal RSS performance can be realized using a small number of phase-shift levels \cite{wu2019beamforming,huang2018energy}. In any case, 
\textcolor{black}{the loss of a few dB due to} the material properties or due to sub-optimal RSS configuration is insignificant compared to the signal propagation loss \cite{bjornson2021reconfigurable}. }}
\begin{equation}
\label{perfectshift}
\theta_i + \phi_i=0\; \text{and}\; \rho_i=1,  \; \forall i=1,\ldots,N.
\end{equation}
Moreover, assuming that the variation of $d_{ti}$ and $d_{ir}$ is negligible across the RSS, we have 
\begin{equation}
    \label{Eq:distApprox}
  d_{ti} +d_{ir} \approx 2d,\; \forall i=1,\ldots,N,
\end{equation}
where $d=d_l/2$.
Hence, the received power can be rewritten as \begin{equation}\label{Eq:Pr2}
P_r = P_t G_t G_r \left( \dfrac{\lambda }{4\pi }\right)^2 \left(\frac{d_0^{\left(\alpha-2\right)}}{(2d)^\alpha}\right) \left(1+N\right)^2. 
\end{equation}
According to (\ref{Eq:Pr2}), the location of the RSS has no impact on the received power, while the path loss is the dominant factor (e.g., $\alpha \geq 3$ in urban environments \cite{rappaport1996wireless}). Thus, in typical terrestrial environments, although the received power enhances quadratically with the number of RSS reflectors, it degrades at a much higher rate with 
the propagation distance.

\subsection{The Scattering Reflection Paradigm}
Assuming that the Tx-RSS and RSS-Rx distances are large (i.e., higher than $d_{\rm lim}$), and that tiny RSS reflector units are used (i.e., with dimensions between $0.1\lambda$ and $0.2\lambda$), then 
each reflector capturing the transmitted signal behaves as a new signal source that re-scatters the signal towards the UE. In this 
\textcolor{black}{paradigm, —the scattering reflection paradigm— the total effect} on the transmitted signal is the resultant of the cascaded individual channels Tx-RSS and RSS-Rx \cite{basar2020simris,ellingson2019path,Gab2020}. 

\textcolor{black}{A scattering paradigm for RSS-assisted communication is typically presented as an alternative to the degradation of direct links caused by strong blockages}
\cite{yildirim2020modeling}. Accordingly, the effect of the direct link is ignored, and the effective received signal is only the one scattered by the RSS. Since in this paradigm, the use of tiny RSS reflector units is advocated, we 
\textcolor{black}{name these small-sized RSS} (SRSS).

To accurately assess the SRSS-assisted terrestrial communications, we present next the link budget analysis for two channel models, namely the log-distance and 3GPP based models\footnote{
\textcolor{black}{Several researchers have recently raised practical concerns} about the specular reflection paradigm and the use of LRSS \cite{wu2020intelligent,myths}. Subsequently, we adopt \textcolor{black}{in this paper} the practical 3GPP channel model under the scattering reflection paradigm only.} 

\subsubsection{Log-Distance Channel Model}
For the $i^{th}$ reflector unit of the SRSS, the channel effect of the received signal, denoted $a_i$, is resulting from two cascaded channels, i.e., Tx-$i^{th}$ reflector and $i^{th}$ reflector-Rx  \cite{basar2020simris,ellingson2019path}. The channel coefficient is given by
\begin{equation}
\label{Eq:ref.ch.effect}
a_i = \sqrt{P_t G_{t} G_{r}} h_{ti}g_{ir} \rho_i e^{-j\phi_i},\; i=1,\ldots,N,
\end{equation}
where $\phi_i$ is the adjusted phase shift of the reflector, while $h_{ti}$ and $g_{ir}$ are the complex-valued coefficients representing the links between the $i^{th}$ reflector and both Tx and Rx. The latter are defined by
\begin{equation}
\begin{aligned}
h_{ti}=  \left( \dfrac{\lambda}{4\pi d_0}\right) \left( \dfrac{d_0}{d_{ti}}\right)^\gamma & e^{j \theta_{ti}}, \; \; \text{and}\;\;  g_{ir}=\left( \dfrac{\lambda}{4\pi d_0} \right) \left( \dfrac{d_0}{d_{ir}}\right)^\gamma  e^{j \theta_{ir}},
\end{aligned}
\end{equation} 
with $\theta_{ti}$ and $\theta_{ir}$ denoting the transmit and receive channel phases, respectively.
Following the generalization to the $N$ reflectors, the received power can be expressed by
\begin{equation}
P_r= P_t G_{t} G_{r} \left( \dfrac{\lambda}{4\pi d_0}\right)^4 d_0^{(2 \alpha)} \left( \sum_{i=1}^{N} \dfrac{\rho_i e^{-j  \left(\phi_i - \theta_{ti}-\theta_{ir}\right) }  }{(d_{ti}d_{ir})^\gamma} \right) ^2. 
\end{equation}
We assume lossless reflectors, i.e., 
\begin{equation}
\label{eq:rho}
    \rho_i =1, \forall i=1,\ldots,N,
\end{equation}
and that
\begin{equation}
\label{eq:distt}
    d_t \approx d_{ti} \; \text{and} \; d_r \approx d_{ir}, \forall  i=1,\ldots,N,
\end{equation}
where $d_t$ and $d_r$ are reference distances measured between the center of the SRSS and the Tx and Rx, respectively. Subsequently, the received power can be maximized by coherently combining the received signals through the $N$ reflectors, i.e., $\phi_i = \theta_{ti} + \theta_{ir}$. Hence, the received power can be written as
\begin{equation}
\label{Eq:terres_scattering}
P_r = P_t G_t G_r \left( \dfrac{\lambda}{4\pi}\right)^4   \left( \frac{d_0^{(2 \alpha-4)}}{\left(d_t d_r\right)^\alpha} \right)  N^2.
\end{equation}
According to (\ref{Eq:terres_scattering}), $P_r$ degrades faster 
than in (\ref{Eq:Pr2}), due to the distances multiplication. Also, $P_r$ is maximized when the RSS is the closest to either Tx or Rx.

\subsubsection{3GPP Channel Model}
RSS in terrestrial environments can be placed on facades of buildings to smartly reflect signals toward users. Since such smart buildings are expected to be available in modern urbanized environments, we assume in the following the urban scenario of the 3GPP standard model \cite{3gpp38901study}
\footnote{\textcolor{black}{Practically speaking, we envision that RSS to be deployed on the facades of high-rise buildings, which are available in urban environments. Also, we expect RSS to be implemented where they would bring profit
to service providers. Due to the high density of customers in urban areas, it is more likely that RSS will be deployed massively in urban environments and perhaps rarely or never in rural environments.}}.
The total path loss for the Tx-SRSS and SRSS-Rx links can be written as
\begin{equation}
\label{Eq:pathl}
   PL=\mathcal{P}^{\rm{LoS}} PL^{\rm{LoS}} + \mathcal{P}^{\rm{NLoS}} PL^{\rm{NLoS}} + PL_e, 
\end{equation}
where $\mathcal{P}^{\rm{LoS}}$ and $\mathcal{P}^{\rm{NLoS}}$ are the LoS and NLoS probabilities, $PL^{\rm{LoS}}$ and $PL^{\rm{NLoS}}$ are the associated losses in the LoS and NLoS conditions, and $PL_e$ accounts for the extra loss of indoor users. 
\textcolor{black}{The latter vary greatly in terms of building type,}
location within the building, and movement in the building. \textcolor{black}{For an accurate calculation of $PL_e$,}  we refer the reader to \cite{itu2020}. 
\textcolor{black}{But since the focus in our system is on outdoor users, $PL_e$ is ignored.}

The LoS probability in a terrestrial environment between the SRSS and Tx or Rx, assuming that the 
heights of the RSS-equipped building and Rx are below 13 m,
can be given by \cite[Table 7.4.2-1]{3gpp38901study}
\begin{equation}
\label{Eq;PLoS_ter}
    \mathcal{P}^{\rm{LoS}}=
    \begin{cases}
       1 \enspace &\rm{if} \enspace d_{2D} \leq 18 m\\
       \dfrac{18}{d_{2D}} + \exp \left(-\dfrac{d_{2D}}{63}\right) \left(1-\dfrac{18}{d_{2D}}\right)  \enspace &\rm{if} \enspace d_{2D} > 18 m,\\
        \end{cases}
\end{equation}
where $d_{2D}$ is the 2D separating distance (projected on the ground) between the SRSS and the Tx or Rx, whereas
the path loss for LoS and NLoS links is given as follows \cite[Table 7.4.1-1]{3gpp38901study}: \begin{equation}
\label{Eq:Los}
    PL^{\rm{LoS}} = 28+ 22 \log (d_{3D}) + 20 \log(f) + X
\end{equation}
and
\begin{equation}
\label{Eq:NLos}
PL^{\rm{NLoS}} = \max (PL^{\rm{LoS}}, \bar{PL}^{\rm{NLoS}})
\end{equation}
where
\begin{eqnarray}
\label{Eq:NLos2}
    \bar{PL}^{{\rm{NLoS}}}&=&    13.54 + 39.08 \log (d_{3D}) + 20 \log(f)\nonumber \\
     &-& 0.6 \; (H_x - 1.5)+ X,\; x \in \{ \text{RSS}, \text{Rx} \},
\end{eqnarray}
$d_{3D}$ is the 3D Tx-SRSS or SRSS-Rx separation distance in meters, $f$ is the carrier frequency in GHz, $X$ is a log-normal random variable denoting the shadow fading, with standard deviation $\sigma = 4$ dB and $\sigma = 7.8$ dB for LoS and NLoS links, respectively, and $H_{x}$ denotes the SRSS/Rx height. Specifically, for the Tx-SRSS link, $H_{x} = H_{RSS}$, where $H_{RSS}$ is the height of the building coated with the RSS, while for the SRSS-Rx link, $H_{x} = H_{Rx}$, with $H_{Rx}$ is the Rx height.

Accordingly, the received power, in dBm, can be written as
\begin{equation}
\label{eq:terres_3GPP}
    P_r = P_t +  G_t  +G_r - PL_{Tx-SRSS} - PL_{SRSS-Rx} + 20 \log{(N)},
\end{equation}
where $PL_{Tx-SRSS}$ and $PL_{SRSS-Rx}$ 
\textcolor{black}{represent the path loss} for the Tx-SRSS and SRSS-Rx links \textcolor{black}{respectively}, calculated using (\ref{Eq:pathl}).
\section{Link Budget Analysis for RSS-Assisted Non-Terrestrial Networks}\label{sec:budget_non_terr}

In this section, we start by presenting the aerial platforms characteristics. Then, we derive the link budget analysis of RSS-enabled non-terrestrial 
\textcolor{black}{communications (i.e., when an RSS is mounted on an aerial platform, such as a UAV, HAPS, or LEO satellite).}
Given the identified specular and scattering reflection paradigms,  we derive the related received power expressions.

\begin{figure}[t]
	\centering
	\includegraphics[width=\linewidth]{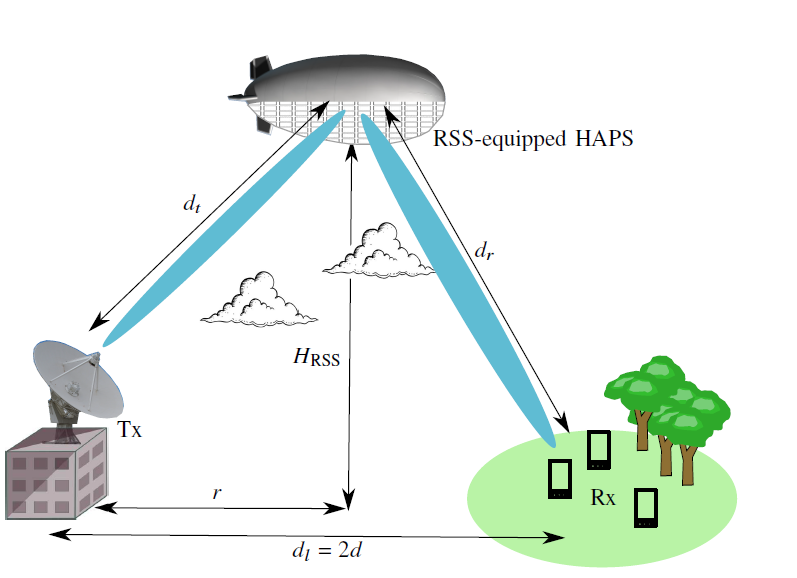}
	\caption{System model of an aerial platform equipped with RSS. 
	}
			\label{Fig:aerial_RSS}
\end{figure}
\subsection{Aerial Platforms Characteristics}\label{sec:plat_features}

\textcolor{black}{Three types of aerial platforms can be used to host RSS technology, namely UAVs, HAPS nodes, and LEO satellites.}
These platforms are usually deployed at different altitudes and target different coverage areas. Specifically, UAVs are deployed around 100 m altitude (and can operate up to 300 m \cite{3gpp2017enhanced}), to serve an area below 5 km \textcolor{black}{radius} \cite{alzenad20173,Mozaffari2019}.
\textcolor{black}{By contrast,} HAPS nodes are quasi-stationary platforms positioned in the stratosphere at an altitude between 17 and 50 km. However, most HAPS projects 
\textcolor{black}{target an altitude of 20 km} due to its preferred 
atmospheric characteristics for the platform stability and communication quality \cite{kurt2020vision}. Thus, HAPS systems have a much wider footprint than UAVs that spans from 40 to 100 km for high throughput \cite{kurt2020vision,sahabul2021} and can go up to 500 km according to the International Telecommunications Union (ITU) \cite{ITU_F1500}. Finally, LEO satellites 
orbit 
\textcolor{black}{the earth at an altitude between 400 and 2,000 km,} with an orbital period between 88 and 127 minutes \cite{3gpp2017Technical}. Accordingly, LEO satellites have the largest coverage footprint. Nevertheless, the coverage area significantly depends on the satellite's 
\textcolor{black}{altitude, elevation angle, and coverage scheme (i.e., whether it uses a spot or wide communication beam).}
Consequently, the LEO satellite's footprint has a radius between hundreds and thousands of kilometers \cite{fossa1998overview,cakaj2016coverage,su2019broadband}. To communicate with UAVs and HAPS nodes,  UE can use the same device 
\textcolor{black}{as it would to communicate with a terrestrial BS.}
However,  different equipment is needed to communicate with LEO satellites, since they require LEO tracking, either mechanically or electronically, in order to compensate for the satellite motion and achieve a reliable communication \cite{3gpp2017Technical}.

In addition to the aforementioned characteristics, the physical size of the aerial platform is crucial to enable hosting the RSS.  
UAVs have the smallest size, and 
 hence can dedicate only
a small area for RSS. On the other hand, two types of HAPS are identified, namely the aerostatic HAPS and the aerodynamic HAPS. \textcolor{black}{Aerostatic HAPS nodes, especially airships, are giant platforms whose lengths are typically between 100 and 200 m}, whereas aerodynamic HAPS nodes have wingspans between 35 and 80 m \cite{kurt2020vision}. Finally, the size of current LEO satellites is less than 10 m \cite{LEO_sat_iridium,LEO_sat_est}. Since HAPS nodes and LEO satellites are typically equipped with solar panel arrays, a part of the platform surface can be used to mount  RSS equipment.

\subsection{The Specular Reflection Paradigm}
When 
the Tx and Rx\footnote{Notice that Tx and Rx in the non-terrestrial communication context can be different according to the  aerial platform used. For instance, (Tx, Rx) can be (BS, UE) for a UAV, or it can be (Gateway, UE) \textcolor{black}{for a HAPS platform or a LEO satellite. A HAPS UE might be a mobile, vehicular, or fixed cellular user device, whereas a LEO has a fixed UE such as a household receiver \cite{starlink}.}} 
are separated by a relatively long distance, an aerial platform equipped with an LRSS can be used to assist the communication. Specifically, the Tx transmits its signal to the LRSS-equipped aerial platform. Then, the LRSS smartly reflects the incident signal towards the Rx, as illustrated in Fig. \ref{Fig:aerial_RSS}. 
Hence, the received noise-free signal in the specular paradigm is identical to the one in (\ref{eq:received}), whereas the channel effect is expressed using the free-space path-loss formula as
\begin{equation}
a = \sqrt{P_t G_t G_r} \left( \dfrac{\lambda}{4\pi}\right) \sum_{i=1}^{N}\dfrac{ \rho_i e^{-j(\theta_i + \phi_i)}}{d_{ti}+d_{ir}},
\end{equation}
where the LoS wireless link component is predominant, i.e., the path-loss exponent $\alpha=2 \gamma=2$.
By following the same assumptions as in (\ref{eq:rho})--(\ref{eq:distt}) and adopting a similar phase-shift configuration as in (\ref{perfectshift}), the received power can be given by
\begin{equation}
\label{Eq:aerial3}
P_r = P_t G_t G_r \left( \dfrac{\lambda}{4\pi}\right)^2  \dfrac{N^2}{(d_t + d_r)^2}. 
\end{equation} 
Unlike (\ref{Eq:Pr2}), 
the parameters $\lambda$, $N$, and $(d_t+d_r)$ have the same scaling law for the received power improvement or degradation.


One of the distinct features of aerial platforms compared to terrestrial networks is 
\textcolor{black}{their flexibility of placement, which allows to further enhance the system performance.}
\begin{Proposition}\label{prop:sp}
Given that Tx and Rx are collinearly separated by distance $d_l=2 d$, and that the aerial platform is located at altitude $H_{RSS}$ and horizontally separated from the Tx by a distance $r$, i.e., $d_t=\sqrt{H_{RSS}^2+r^2}$ and $d_r=\sqrt{H_{RSS}^2+(2d-r)^2}$, then the optimal placement of the aerial platform under specular reflection is given by 
\begin{equation}
r^*=d.
\end{equation}
\textcolor{black}{That is,} the platform is placed
over the perpendicular bisector of the segment Tx-Rx at altitude $H_{RSS}$. 
\end{Proposition}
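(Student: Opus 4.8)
The plan is to convert the placement question into a one-dimensional minimization and dispatch it with elementary calculus, the whole argument hinging on the monotonicity of a single auxiliary function. First I would observe that in the received-power expression (\ref{Eq:aerial3}) every factor except the denominator $(d_t+d_r)^2$ is independent of the horizontal offset $r$, and that $P_r$ is strictly decreasing in $d_t+d_r$. Hence maximizing $P_r$ is equivalent to minimizing the total propagation distance $f(r)=d_t+d_r=\sqrt{H_{RSS}^2+r^2}+\sqrt{H_{RSS}^2+(2d-r)^2}$ over the feasible range $r\in[0,2d]$.

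Next I would impose the first-order condition. Differentiating yields $f'(r)=\frac{r}{\sqrt{H_{RSS}^2+r^2}}-\frac{2d-r}{\sqrt{H_{RSS}^2+(2d-r)^2}}$, so stationarity requires $\frac{r}{\sqrt{H_{RSS}^2+r^2}}=\frac{2d-r}{\sqrt{H_{RSS}^2+(2d-r)^2}}$. The crucial remark is that the map $g(x)=x/\sqrt{H_{RSS}^2+x^2}$ is strictly increasing on $[0,\infty)$, since $g'(x)=H_{RSS}^2/(H_{RSS}^2+x^2)^{3/2}>0$. As both $r$ and $2d-r$ lie in $[0,\infty)$, the injectivity of $g$ collapses the stationarity equation to $r=2d-r$, that is $r^*=d$, which is exactly the perpendicular-bisector placement claimed.

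Finally I would verify that this stationary point is the global minimizer rather than a maximum or a boundary optimum. The cleanest route is convexity: each summand $\sqrt{H_{RSS}^2+(\cdot)^2}$ is the composition of a convex norm with an affine map, hence convex in $r$, so $f$ is convex on the entire real axis; a convex function has at most one stationary point, and it is automatically a global minimum. As an independent sanity check I would note the geometric reading: with the transmitter and receiver viewed as points on the ground and the surface constrained to the altitude line at height $H_{RSS}$, this is precisely Heron's shortest-reflected-path problem, whose solution—reflect the Tx across that line and connect it by a straight segment to the Rx—also places the optimal horizontal coordinate at the midpoint $r=d$.

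I expect no serious difficulty here; the single step demanding care is excluding the possibility that the critical point is a maximum or an endpoint optimum, which the convexity argument settles at once. Should one wish to avoid convexity, the same conclusion follows either from $f''(r)>0$ or simply by comparing the interior value with the endpoints via $f(d)<f(0)=f(2d)$ together with the uniqueness of the interior critical point.
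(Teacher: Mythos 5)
Your proposal is correct, and its core is the same as the paper's: reduce maximization of $P_r$ in (\ref{Eq:aerial3}) to minimization of $d_{sp}(r)=\sqrt{H_{RSS}^2+r^2}+\sqrt{H_{RSS}^2+(2d-r)^2}$ and apply the first-order condition, which yields exactly the paper's equation (\ref{Eq:1st_dervative_sp}). Where you differ is in how optimality is certified, and your version is actually the stronger one. The paper solves the stationarity equation (asserting $r^*=d$ without addressing uniqueness) and then checks that the second derivative is positive \emph{at} $r=d$, which by itself only establishes a local minimum. You instead (i) observe that $g(x)=x/\sqrt{H_{RSS}^2+x^2}$ is strictly increasing, so the stationarity equation $g(r)=g(2d-r)$ forces $r=2d-r$ and the critical point is provably unique, and (ii) invoke convexity of each summand (a norm composed with an affine map) to conclude that this unique stationary point is the \emph{global} minimizer over all of $\mathbb{R}$, not merely over $[0,2d]$ and not merely locally. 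Your Heron's-reflection reading (reflect the Tx across the altitude line and take the straight segment to the Rx) is a genuinely different, calculus-free route to the same conclusion that the paper does not mention; it explains \emph{why} the answer is the midpoint and would make the proposition immediate to a reader. In short: same skeleton, but your injectivity-plus-convexity completion closes a small logical gap in the paper's second-derivative argument.
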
  
\begin{proof}
The specular equivalent path-loss distance, denoted $d_{sp}$, can be written as 
\begin{equation}\label{Eq:D_sp}
{d}_{sp} = d_t + d_r = 
\sqrt{(H_{RSS}^2 + r^2)}  + \sqrt{H_{RSS}^2+(2d-r)^2}. 	\end{equation} 
To maximize $P_r$, we need to minimize $d_{sp}$ through nulling its first derivative, i.e,
\begin{equation}\label{Eq:1st_dervative_sp}
\dfrac{\partial {d}_{sp}}{ \partial r} =\dfrac{r}{\sqrt{r^2+H_{RSS}^2}}-\dfrac{2d-r}{\sqrt{\left(2d-r\right)^2+H_{RSS}^2}}=0,
\end{equation}
from which we obtain $r^{*}=d$. When substituting this solution in the second derivative of $d_{sp}$, $\frac{\partial^2 d_{sp}}{\partial r^2}$, we get
\begin{eqnarray}
\label{Eq:2nd_dervative_sp}    \frac{\partial^2 d_{sp}}{\partial r^2}&=&\dfrac{H_{RSS}^2\left(\left(r^2+H_{RSS}^2\right)^\frac{3}{2}+\left(\left(2r-r\right)^2+H_{RSS}^2\right)^\frac{3}{2}\right)}{\left(\left(2d-r\right)^2+H_{RSS}^2\right)^\frac{3}{2}\left(r^2+H_{RSS}^2\right)^\frac{3}{2}}\nonumber \\
&\overset{(r=d)}{=}& \dfrac{2H_{RSS}^2}{(d^2+H_{RSS}^2)^ {\frac{3}{2}}}.
\end{eqnarray}
Since $\frac{\partial^2 d_{sp}}{\partial r^2} > 0$, then $r^*=d$ is the optimal value that minimizes ${d}_{sp}$.
\end{proof}
Accordingly, the maximal received power can be calculated using (\ref{Eq:aerial3}) for $r^*=d$ as follows:
 \begin{equation}
 \label{Eq:aerial_SP}
 P_r^* = P_t G_t G_r \left( \dfrac{\lambda}{4\pi}\right)^2 \dfrac{N^2}{4 ( {H_{RSS}^2+d^2})}.
 \end{equation}

\subsection{The Scattering Reflection Paradigm}
Similar to Section II, the scattering reflection paradigm is investigated for two channel models, namely the log-distance and the 3GPP models. 

\subsubsection{Log-Distance Channel Model}
Similarly to the SRSS-assisted terrestrial network, the received power at the Rx in the scattering paradigm can be expressed using (\ref{Eq:terres_scattering}) for $\alpha=2$, \textcolor{black}{thus:}
\begin{equation}
\label{Eq:budgetsc}
    P_r=P_t G_t G_r \left( \frac{\lambda}{4 \pi} \right)^4 \left(\frac{N}{d_t d_r}\right)^2.
\end{equation}

Unlike the specular paradigm, where the optimum aerial platform location is over the perpendicular bisector of segment Tx-Rx, the best platform location under the scattering paradigm is expected to be different due to the cascaded channel effect.

\begin{Proposition}\label{prop:sc}
Given that Tx and Rx are collineraly separated by distance $d_l=2d$, and that the aerial platform is located at altitude $H_{RSS}$ and horizontally separated from Tx by a distance $r$, then the optimal placement of the aerial platform under scattering reflection is given by
\begin{equation}
r^*=
\begin{cases}
d \pm \sqrt{d^2-H_{RSS}^2} \enspace &\text{\rm{if}} \enspace d\geq H_{RSS}\\
d \enspace  & \text{\rm otherwise}.
\end{cases}
\end{equation}

\end{Proposition}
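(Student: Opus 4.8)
The plan is to exploit the proportionality already established in (\ref{Eq:budgetsc}). Since $P_t$, $G_t$, $G_r$, $\lambda$, and $N$ are fixed, maximizing $P_r$ over the platform's horizontal offset $r$ is equivalent to minimizing the product $d_t d_r$, or equivalently its square
\begin{equation*}
f(r) = d_t^2 d_r^2 = \left(H_{RSS}^2 + r^2\right)\left(H_{RSS}^2 + (2d-r)^2\right).
\end{equation*}
First I would differentiate and factor. A direct computation gives $f'(r) = 4(r-d)\left[H_{RSS}^2 - r(2d-r)\right]$, so the stationary points are $r = d$ together with the roots of $r^2 - 2dr + H_{RSS}^2 = 0$, namely $r = d \pm \sqrt{d^2 - H_{RSS}^2}$, which are real precisely when $d \geq H_{RSS}$. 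This already reproduces the three candidate values appearing in the statement.

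The decisive step is classifying these critical points, and here the brute-force second-derivative test used in Proposition~\ref{prop:sp} becomes unwieldy. Instead I would center the problem at the symmetry point by substituting $s = r - d$ (so $r = d+s$ and $2d - r = d - s$). This collapses the quartic to
\begin{equation*}
f(s) = \left(H_{RSS}^2 + d^2 + s^2\right)^2 - 4 d^2 s^2,
\end{equation*}
a function of $s^2$ alone, reflecting the symmetry of the geometry about the perpendicular bisector of segment Tx-Rx. Setting $t = s^2 \ge 0$ then reduces the problem to minimizing the upward parabola $\tilde f(t) = t^2 + 2\left(H_{RSS}^2 - d^2\right)t + \left(H_{RSS}^2 + d^2\right)^2$ over $t \geq 0$, whose unconstrained vertex sits at $t^{*} = d^2 - H_{RSS}^2$.

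The case split then follows immediately from the sign of $d^2 - H_{RSS}^2$. When $d \ge H_{RSS}$ the vertex $t^{*} = d^2 - H_{RSS}^2 \geq 0$ lies in the feasible region, so the minimum is attained there, i.e.\ at $s^2 = d^2 - H_{RSS}^2$, giving $r^{*} = d \pm \sqrt{d^2 - H_{RSS}^2}$; in this regime the coefficient $H_{RSS}^2 - d^2$ is negative, so $r = d$ (the point $t = 0$) is only a local maximum. When $d < H_{RSS}$ the coefficient $H_{RSS}^2 - d^2$ is strictly positive, so $\tilde f$ is increasing on $t \ge 0$ and the minimum is forced to the boundary $t = 0$, i.e.\ $r^{*} = d$. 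This yields both branches of the claimed optimum. I expect the main obstacle to be precisely this classification: demonstrating that the two interior symmetric points dominate $r = d$ exactly when they exist, which the $t = (r-d)^2$ reduction settles cleanly and without the need to evaluate $f''$.
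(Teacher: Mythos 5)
Your proof is correct, and its critical-point computation coincides with the paper's: your factored derivative $f'(r)=4(r-d)\left[H_{RSS}^2-r(2d-r)\right]$ expands to exactly the cubic in (\ref{Eq:1st_dervative}), yielding the same three stationary points listed in (\ref{Eq:r_cases}). Where you genuinely diverge is in the classification step. The paper substitutes each root into the second derivative (\ref{Eq:2nd_dervative}) and checks its sign case by case; this certifies local minimality only (global optimality then rests, implicitly, on the coercivity of $d_{sc}^2$ and the exhaustive list of critical points, which the paper does not spell out). Your substitution $s=r-d$, $t=s^2$ instead collapses the quartic to the upward parabola $\tilde f(t)=t^2+2\left(H_{RSS}^2-d^2\right)t+\left(H_{RSS}^2+d^2\right)^2$ on $t\ge 0$, so the case split is read off from whether the vertex $t^{*}=d^2-H_{RSS}^2$ is feasible: the minimizer you obtain is automatically \emph{global}, the observation that $r=d$ turns into a local maximum when $d>H_{RSS}$ comes for free, and the degenerate boundary case $d=H_{RSS}$ (all three roots coalescing) is handled uniformly rather than falling between the paper's two sign conditions. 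The trade-off is essentially expository: the paper's route is routine calculus with no change of variables, while yours exploits the symmetry of the geometry about the perpendicular bisector to get a stronger conclusion with no second-derivative computation.
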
  
\begin{proof}
The scattering equivalent path-loss distance, denoted $d_{sc}$, can be expressed through
\begin{equation}
\label{Eq:D_sc}
    d_{sc}^2=(d_t d_r)^2=\left( H_{RSS}^2+r^2\right) \left( H_{RSS}^2 + \left( 2d-r\right)^2 \right).
\end{equation}
To maximize $P_r$, 
\textcolor{black}{we must minimize}
$d_{sc}^2$ by nulling its first derivative as follows:
\begin{equation}\label{Eq:1st_dervative}
\dfrac{\partial {d}^2_{sc}}{\partial r} =4r^3-12 d  r^2+\left(8 d^2+4H_{RSS}^2\right)r-4 d H_{RSS}^2=0.
\end{equation}
By solving (\ref{Eq:1st_dervative}), we obtain the follwing roots:
\begin{equation}\label{Eq:r_cases}
r^*= \begin{cases}
d + \sqrt{d^2-H_{RSS}^2} \\
d - \sqrt{d^2-H_{RSS}^2} \\
d.
\end{cases}
\end{equation}
The second derivative is given by
\begin{equation}\label{Eq:2nd_dervative}
\dfrac{\partial ^2 {d}^2_{sc}}{\partial r^2} =12r^2-24d r+8d ^2+4H_{RSS}^2,
\end{equation}
in which we substitute the roots of (\ref{Eq:r_cases}). Consequently, we find that (\ref{Eq:2nd_dervative}) is positive only when \{$r=d$ and $d \leq H_{RSS}$\} or \{$r=d \pm \sqrt{d^2-H_{RSS}^2}$ and $d \geq H_{RSS}$\}. These ($r,d$) values dictate the best aerial platform locations 
\textcolor{black}{where received power is maximal.}
\end{proof} 
Physically speaking, Proposition (\ref{prop:sc}) implies that under the scattering reflection paradigm, when the height of a platform is larger than its designed coverage radius, the platform should be placed at the \textcolor{black}{mid-point} between the Tx and Rx. However, when the targeted coverage radius is larger than the platform's height, two optimal locations can be used, which are being close to either the Tx or the Rx. 

By substituting the result of (\ref{Eq:r_cases}) in (\ref{Eq:D_sc}) and then (\ref{Eq:budgetsc}), the received power at Rx can be written as
\begin{equation}
\label{Eq:aerial_scattering}
P_r^* = P_t G_t G_r \left( \dfrac{\lambda}{4\pi}\right) ^4   \dfrac{N^2}{({d}^*_{sc})^2}. 
\end{equation}
where ${d}^*_{sc}$ is the equivalent path-loss distance for the optimal platform location, given by
\begin{equation}
{d}^*_{sc}=
\begin{cases}
H_{RSS}^2+d^2  \enspace &\text{if} \enspace d\leq H_{RSS}\\
2H_{RSS}d	     \enspace &\text{otherwise}.\\
\end{cases}
\end{equation}
  
Unlike (\ref{Eq:aerial3}), the impact of $H$, $d$, and $\lambda$, is more important than $N$ in (\ref{Eq:aerial_scattering}) due to the scattering effect.

\subsubsection{3GPP Channel Model}
For a fair comparison with RSS-assisted terrestrial communications, we investigate here the link budget analysis for RSS-enabled aerial platforms with realistic 3GPP channel models. In what follows, given that platforms operate at different altitudes and thus may experience different attenuation phenomena, 
\textcolor{black}{we study the link budget for RSS-assisted UAVs separately from RSS-equipped HAPS nodes and LEO satellites.}

\vspace{5pt}
\begin{itemize}
    \item \textbf{The UAV-Based Model:}
\end{itemize}

 RSS-enabled UAVs are envisioned to cooperate with terrestrial BSs to support terrestrial users. 
 \textcolor{black}{They can tackle coverage gaps }or increase the capacity of terrestrial users by reflecting the BSs' signals towards users \cite{alfattani2020aerial}. In such context, it is acceptable to consider the 3GPP channel model between terrestrial BSs and UAVs \cite{3gpp2017enhanced}. Due to the relatively low altitude of UAVs, the latter are generally assumed to have full LoS conditions. Consequently, the path loss in rural and urban environments\footnote{In contrast to RSS-assisted terrestrial communications, where RSS deployment is expected to be concentrated in urban areas, 
RSS-equipped aerial platforms 
\textcolor{black}{have the flexibility needed for both} urban and rural areas.} 
can be given by \cite[Table B-2]{3gpp2017enhanced} 
\begin{eqnarray}
\label{Eq:ruralm}
PL^{\rm rural}&=&
\max\left[23.9-1.8 \log_{10}(H_x) , 20\right] \log_{10}(d_{3D})\nonumber \\    &+&20\log_{10}(\dfrac{40 \pi f}{3}) +X, x \in \{ \text{UAV}, \text{UE} \}
\end{eqnarray}
and
\begin{equation}
\label{Eq:urbanm}
PL^{\rm urban}=28+  22\log_{10}(d_{3D})+20\log_{10}(f)+X, x \in \{ \text{UAV}, \text{Rx} \},
\end{equation}
where $H_x$, $d_{3D}$, $f$, and $X$ are defined as in (\ref{Eq:Los})--(\ref{Eq:NLos2}), except that $X$ has a log-normal distribution with standard deviation given by\textcolor{black}{\footnote{\textcolor{black}{Note that variable $X$ may be modified to include the effect of atmospheric turbulence, and thus the value of $\sigma$ can be adjusted accordingly.}}}
\begin{equation}
    \sigma=
    \begin{cases}
4.2 \;e^{(-0.0046 \; H_x)}  \enspace & \text{in rural environment,}  \\
 4.64 \;e^{(-0.0066 \; H_x)}	     \enspace & \text{in urban environment.}\\
    \end{cases}
\end{equation}
Subsequently, the received power in dBm can be expressed by
\begin{equation}
\label{Eq:UAV}
    P_r = P_t +  G_t  +G_r - PL_{Tx-UAV} - PL_{UAV-Rx} + 20 \log{(N)},
\end{equation}
where $PL_{Tx-UAV}$ and $PL_{UAV-Rx}$ are the path losses for the Tx-UAV and UAV-Rx links, calculated using either (\ref{Eq:ruralm}) or (\ref{Eq:urbanm}), depending on the considered environment.

\vspace{+5pt}
\begin{itemize}
    \item \textbf{The HAPS/LEO based Model:}
\end{itemize}


The support of future wireless networks through HAPS and LEO satellites is envisioned for both rural and urban areas \cite{kurt2020vision,kodheli2020satellite}.
To assess the performance in these different environments,  LoS probabilities are required. Based on the elevation angle of the aerial platform \textcolor{black}{relative to} the terrestrial Tx or Rx, denoted by $\vartheta$,  LoS probabilities can be estimated for different environments using \cite[Table 6.6.1-1]{3gpp2017Technical}. For the sake of simplicity, we propose to substitute \cite[Table 6.6.1-1]{3gpp2017Technical} by a LoS probability function, defined as
\begin{equation}
\label{Eq:Plos_model}
    \mathcal{P}^{LoS} = b_1 \vartheta^{b_2}+b_3,
\end{equation}
where $b_i$, $i=\{1,2,3\}$ are the parameters that depend on the environment,  determined in Table \ref{tab:LSO_prob}. 
The accuracy of (\ref{Eq:Plos_model}) is validated in Fig. \ref{fig:HAPS_LOS_prob}, where it is shown to agree with the results of \cite[Table 6.6.1-1]{3gpp2017Technical}.
\textcolor{black}{Now, the path loss for LoS and NLoS conditions, denoted by $PL$, can be written as}
\begin{equation}
\label{Eq:PLhaps}
    PL=\mathcal{P}^{\rm LoS} PL^{\rm LOS} + \mathcal{P}^{\rm NLoS} PL^{\rm NLOS}, 
\end{equation}
where $\mathcal{P}^{y}$ and $ PL^{y}$, $y \in \{ \text{LoS}, \text{NLoS} \}$, are defined as in (\ref{Eq:pathl}). 
The signal path between a HAPS/LEO and a terrestrial Tx or Rx undergoes several stages of propagation and attenuation. Specifically, the path losses $PL^y$, $y \in \{ \text{LoS}, \text{NLoS} \}$ are composed as follows \cite{3gpp2017Technical}:
\begin{equation}
PL^y= PL_{b}^y + PL_{g} + PL_{s} + PL_{e}, y \in \{ \text{LoS}, \text{NLoS} \},
\end{equation}
where $PL_{b}^y$ is the basic path loss, $PL_{g}$ is the attenuation due to atmospheric gasses, $PL_{s}$ is the attenuation due to either ionospheric or tropospheric scintillation, and
$PL_e$ is the building entry loss, expressed in dB. 

\begin{table}[t]
	\centering
	\caption{Parameters of HAPS/LEO LoS probability model (\ref{Eq:Plos_model}) }\label{tab:LSO_prob}
\begin{tabular}{|c|c|c|c|} 
	\hline 
	Parameter & Dense Urban & Urban & Rural \\ 
	\hline 
	$b_1$ & 0.04235 & 9.668  & -99.95 \\ 
	\hline 
	$b_2$ & 1.644 & 0.547 & -0.5895 \\ 
	\hline 
	$b_3$ & 27.32 & -10.58 & 104.1 \\ 
	\hline 
\end{tabular} 
\end{table}

\begin{figure}
    \centering
    \includegraphics[width=0.9\linewidth]{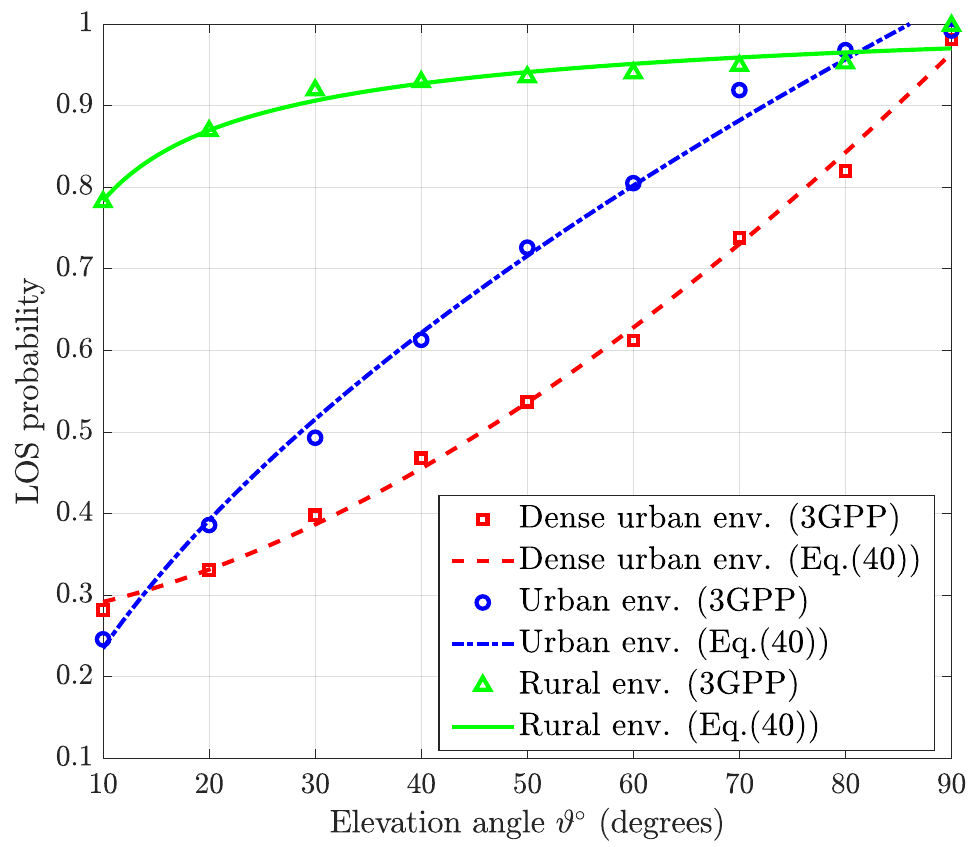}
    \caption{LoS probability of HAPS/LEO \textcolor{black}{relative to} elevation angle (different environments).}
    \label{fig:HAPS_LOS_prob}
\end{figure}

$PL_{b}^y$ accounts for the signal's free-space propagation ($FSPL$), clutter loss ($CL^y$), and shadow fading ($X^y$), i.e., 
\begin{equation}
    PL_b^y=FSPL+CL^y+X^y, \; y \in \{ \text{LoS}, \text{NLoS} \},
\end{equation}
where 
\begin{equation}
FSPL= 32.45+ 20\log_{10}(f) + 20\log_{10}(d_{3D}),
\end{equation}
with $d_{3D}$ being the 3D distance between the HAPS/LEO and the terrestrial Tx or Rx, expressed as a function of the HAPS/LEO altitude $H_z$, $z \in \{\text{HAPS},\text{LEO} \}$ and the platform's elevation angle $\vartheta$ as follows: 
\begin{equation}
    d_{3D}=\sqrt{R^2_E \sin^2({\vartheta})+ H_z^2 +2H_z R_E} - R_E \rm{sin} (\vartheta),
\end{equation}
where $R_E$ denotes \textcolor{black}{the earth's radius.}
The clutter loss, $CL^y$, represents the attenuation caused by buildings and environmental objects. Its value depends on $\vartheta $, $f$, and the environment type. 
In LoS conditions, $CL^{\rm LoS}=0$, while for NLoS, the $CL^{\rm NLoS}$ values of \cite[Tables 6.6.2-1 to 6.6.2-3]{3gpp2017Technical} can be used for the typical Ka spectrum band (i.e., between 26.5 and 40 GHz). 
Finally, $X^y$  
is a zero-mean normal distribution with standard deviation $\sigma^y$, $y \in \{ \text{LoS},\text{NLoS}\}$, whose values are determined in \cite[Tables 6.6.2-1 to 6.6.2-3]{3gpp2017Technical}.    
For the sake of simplicity, we present the average 
\textcolor{black}{values of the parameters} in Table \ref{tab:CL_sigma}\textcolor{black}{\footnote{\textcolor{black}{To be noted that variable $X^y$ can be modified to include the effect of external factors, such as the atmospheric turbulence, and platform's drift. Hence, the value of $\sigma^y$ can be modified accordingly.}}}.

\begin{table}[t]
	\centering
	\caption{Average clutter loss and shadow fading standard deviation in the Ka-band }\label{tab:CL_sigma}
\begin{tabular}{|c|c|c|c|} 
	\hline 
	Parameter & Dense Urban & Urban & Rural \\ 
	\hline 
	$CL^{NLoS}$ & 38.6 & 38.6  & 23.15 \\ 
	\hline 
	$\sigma^{LoS}$ & 1.75 & 4 & 1.15 \\ 
	\hline 
	$\sigma^{NLoS}$ & 14.7 & 6 & 10.75 \\ 
	\hline 
\end{tabular} 
\end{table}

$PL_g$ is the attenuation caused by absorption due to atmospheric gases. Its value depends mainly on $f$, $\vartheta$, and $d_{3D}$. According to \cite{3gpp2017Technical}, 
\textcolor{black}{the effect of atmospheric gases is negligible  for $f \leq 10$ GHz.}
However, for higher frequency bands 
suitable for  RSS operations with a large number of reflectors, the selection of frequency windows with minimum atmospheric effect is important.
In addition to the aforementioned factors, $PL_g$ depends on the dry air pressure $p$, water-vapor density $\xi$, and temperature $T$ \cite{sector2013recommendation}. An illustrative example is shown in Fig. \ref{fig:gas_loss} for different Tx-Rx link lengths, selected for typical distances between a HAPS and Tx or Rx (20 and 100 km) and between a LEO satellite and Tx or Rx (1,000 km). The related parameters ($p$, $\xi$ and $T$) are selected on the basis of the mean annual global reference atmosphere, i.e., $p= 101300$ Pa, $\xi = 7.5 g/m^3$, and $T=15^\degree$C \cite{itu1999p}. The $PL_g$ calculation is carried out using the steps detailed in \cite{sector2013recommendation}. 
\begin{figure}
    \centering
    \includegraphics[width=0.94\linewidth]{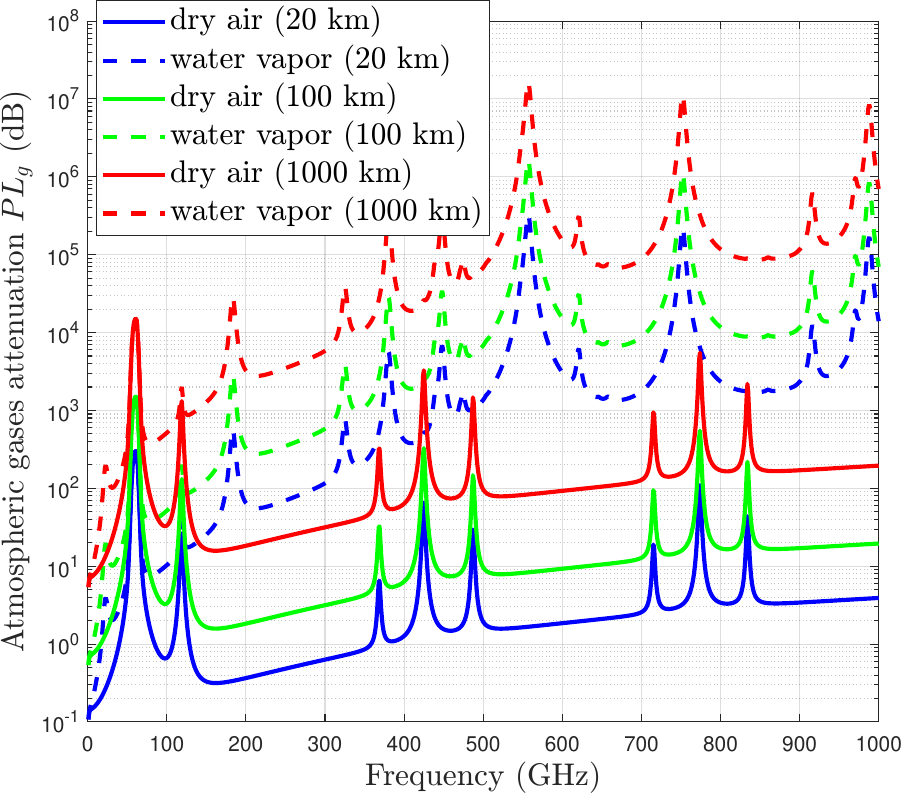}
    \caption{Attenuation due to atmospheric gases $PL_g$ vs. frequency (different path lengths). }
    \label{fig:gas_loss}
\end{figure}


Finally, the scintillation loss $PL_s$ is caused by rapid fluctuations of the received signal amplitude and phase. There are two types of scintillation losses, namely the ionospheric scintillation and the tropospheric scintillation. The 
\textcolor{black}{former only significantly disrupts signals at frequencies below 6 GHz,}
whereas the latter affects only signals in frequencies above 6 GHz.
The impact of the ionospheric scintillation is only significant for latitudes  in the range $[-20\degree, 20\degree]$ 
\cite{3gpp2017Technical}. \textcolor{black}{The ionospheric scintillation loss} can be calculated as detailed in \cite{Ionospheric2019ITU}. Specifically, \textcolor{black}{it} is derived from the measured peak-to-peak fluctuation as follows: 
\begin{equation}
    PL_s=\dfrac{{PF}}{\sqrt{2}},
\end{equation}
where $PF$ is the peak-to-peak fluctuation, equal to $1.1$ dB for $f=4$ GHz, and calculated by the following equation for $f \leq 6$ GHz:
\begin{equation}
    {PF}_{(f \leq 6\; \text{GHz})}={PF}_{(f=4\; \text{GHz})} \cdot (f/4)^{-1.5}.
\end{equation}

Since high frequency bands are advocated for the RSS-enabled HAPS/LEO platforms, tropospheric scintillation is taken into account in the path-loss model.
Specifically, the wireless signal fluctuations are due to sudden changes in the refractive index caused by temperature, water vapor content, and barometric pressure variations. Also, low elevation angles (especially below $5^\degree$) are significantly affected by the scintillation loss, due to the longer path of the signal and the wider beam width. 
The value of the tropospheric scintillation loss is 
\textcolor{black}{season and region dependent.} We refer the reader to \cite{series2015propagation} for the detailed calculation steps.
\textcolor{black}{To give an idea of the typical power attenuation level, }
the  tropospheric attenuation with 99\% probability at 20 GHz in Toulouse, France is tabulated in \cite[Table 6.6.6.2.1-1]{3gpp2017Technical} for different elevation angles. For the sake of simplicity, we model the related tropospheric scintillation loss by
\begin{equation}
\label{Eq:Ps_model}
    PL_s = 14.7 \; \vartheta^{(-1.136)},\; \vartheta \in [0^{\circ},90^{\circ}],
\end{equation}
which is a valid approximation of \cite[Table 6.6.6.2.1-1]{3gpp2017Technical} as shown in Fig. \ref{fig:tropos_sci_loss}.
\begin{figure}
    \centering
    \includegraphics[width=0.9\linewidth]{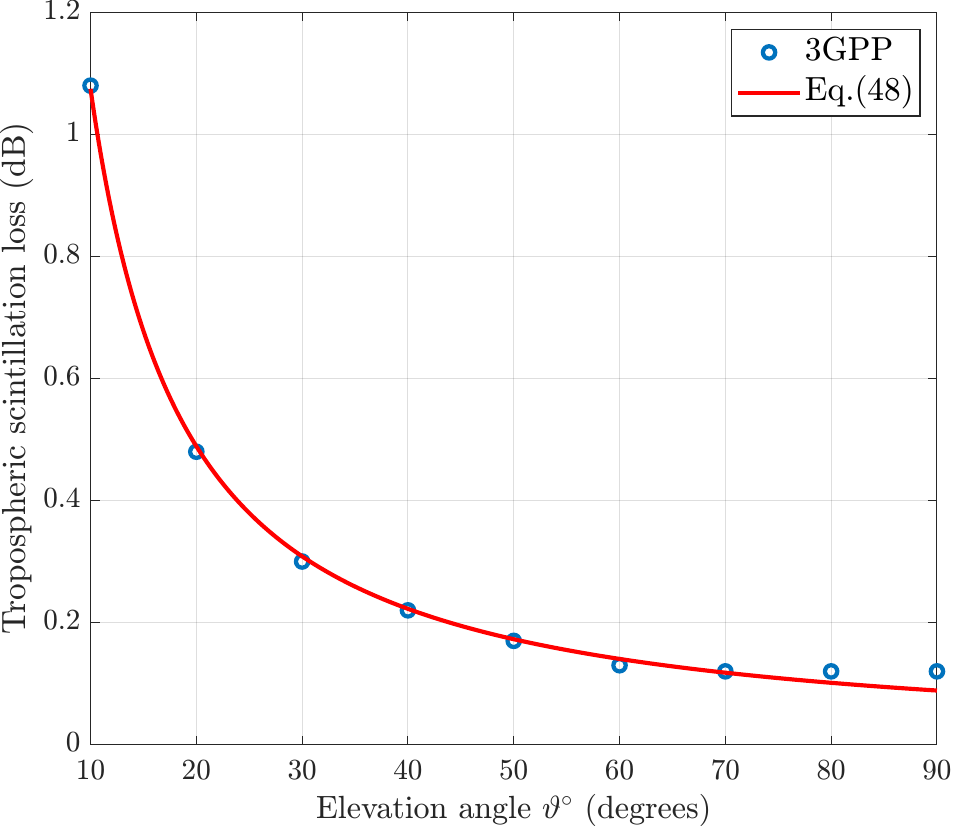}
    \caption{Scintillation loss $PL_s$ vs. elevation angle.}
    \label{fig:tropos_sci_loss}
\end{figure}

Subsequently, the received power in dBm at Rx can be expressed by
\begin{eqnarray}
\label{eq:HAPS_LEO_3GPP}
    P_r &=& P_t +  G_t  +G_r - PL_{Tx-H_L} -PL_{H_L-Rx} \nonumber \\ 
    &+&  20 \log{(N)}, \; H_L \in \{HAPS, LEO \},
\end{eqnarray}
where $PL_{Tx-H_L}$ and $PL_{H_L-Rx}$ are calculated using (\ref{Eq:PLhaps}).


The link budget analysis of Section III is summarized in Table \ref{Tab1}.
\begin{table*}
\centering
\caption{Link Budget Summary}
\label{Tab1}
\footnotesize
\begin{tabular}{|c|c|c|c|c|}
\hline
{\textbf{Scenario}} & {\textbf{Size of RSS reflector}} & \textbf{Channel Model} & {\textbf{Link Budget}} & {\textbf{Dominant parameter(s)}}  \\
\hline 
\hline
\makecell{LRSS-assisted terrestrial communication \\ {(Specular paradigm)}} & $10 \lambda \times 10 \lambda$ & \makecell{Log-distance} & Eq. (\ref{Eq:Pr2}) & \makecell{ $d$, $\lambda$ and $N$ (for $\alpha=2$) \\ $d$ (for $\alpha > 2$)}
 \\ \hline
\makecell{SRSS-assisted terrestrial communication \\ {(Scattering paradigm)}} & \makecell{$[0.1\lambda \times 0.1 \lambda, 0.2 \lambda \times 0.2 \lambda]$} & \makecell{Log-distance \\ {} \\ 3GPP}  & \makecell{Eq. (\ref{Eq:terres_scattering}) \\ {} \\ Eq. (\ref{eq:terres_3GPP})} & \makecell{$\lambda$ (for $\alpha < 4$) \\ $\lambda$, $d_t$, and $d_r$ (for $\alpha = 4$)\\ $f$}
 \\ \hline
\makecell{LRSS-assisted non-terrestrial communication \\ {(Specular paradigm)}} & $10 \lambda \times 10 \lambda$ & Log-distance & Eq. (\ref{Eq:aerial_SP}) & \makecell{$\lambda$, $N$, $H_{RSS}$ and $d$}
\\ 
\hline
\makecell{SRSS-assisted non-terrestrial communication \\ {(Scattering paradigm)}} & \makecell{$[0.1\lambda \times 0.1 \lambda, 0.2 \lambda \times 0.2 \lambda]$} & \makecell{Log-distance \\ 3GPP (UAV)\\ 3GPP (HAPS/LEO)}  & \makecell{Eq. (\ref{Eq:aerial_scattering}) \\  Eq. (\ref{Eq:UAV}) \\ Eq. (\ref{eq:HAPS_LEO_3GPP})} & \makecell{$\lambda$ \\ $f$ \\ $f$}
\\ 
\hline
\end{tabular}
\end{table*}

\section{Results and Discussion}\label{sec:results}
In this section, we evaluate the received power for different RSS-mounted platforms, and the impact of several parameters is investigated. Based on the unique features of each platform, as discussed in Sections II and III, we assume typical values of altitude, coverage radius, and RSS area for each platform, as shown in Table \ref{tab:platforms_chr}. Moreover, we consider the height of the terrestrial Tx $H_{\rm Tx}=25$ m, while the receiver's height is $H_{\rm Rx}=1.5$ m.

\begin{table}[t]
\centering
\caption{Typical parameters of different types of platforms.}\label{tab:platforms_chr}
{\renewcommand{\arraystretch}{1.5}
\begin{tabular}{|c|c|c|c|}
\hline
\textbf{\makecell{Platform}} & \textbf{\makecell{Altitude\\ ($H_{RSS}$)}} & \textbf{\makecell{Coverage radius\\ ($d$)}} & \textbf{ \makecell{RSS area\\ ($A_t$)}} \\
\hline
\hline
Terrestrial & 5 m &  0.5 km & {5 $\times$ 10 m$^{2}$} \\ 
\hline
UAV & 200 m & 2 km & 0.25 $\times$ 0.25 m$^{2}$ \\
\hline
HAPS & 20 km & 50 km & {40 $\times$ 20 m$^{2}$}  
\\
\hline
LEO & 500 km & 500 km & 5 $\times$ 10 m$^{2}$ \\
\hline
\end{tabular}
}
\end{table}

\subsection{Impact of the Number of Reflectors and Environment Type}
Given the selected platform, the type of mounted RSS reflectors is of great concern, since different communication paradigms can be 
\textcolor{black}{followed and these have different costs.}
According to \cite{ellingson2019path}, the cost is expected to be approximately proportional to the RSS size and number of RSS reflector units. Consequently, larger reflectors, with dimensions above $10 \lambda \times 10 \lambda$ and operating in the specular reflection paradigm, are more expensive than small reflectors (i.e., with dimensions lower than $0.2 \lambda \times 0.2 \lambda$), 
which operate in the scattering reflection paradigm. 
Hence, the maximal number of reflectors to install on a platform, denoted $N_{\max}$, is limited by the reserved area on the platform for the RSS $A_t$ and the reflector's size $A_r=c_1 \lambda \times c_2 \lambda$, where $c_1 \lambda>0$ and $c_2 \lambda>0$ are the length and width of a reflector unit, respectively. Their relation is defined by
\begin{equation}
\label{eq:Nmax}
    N_{\max}= \dfrac{A_t}{A_r}=\dfrac{A_t}{c_1 c_2 \lambda^2}.
\end{equation}

In order to  highlight the potential gains of using RSS-equipped platforms, we consider that a communication between a terrestrial Tx and a terrestrial Rx is assisted by an RSS-equipped platform, namely a building facade, a UAV, a HAPS, or a LEO satellite, characterized as in Table \ref{tab:platforms_chr}. For the sake of simplicity, we assume that Tx and Rx are located at the edges of the platform's coverage footprint (i.e., within a distance $2d$). Moreover, due to the flexibility of aerial platforms, we assume that they are placed at the optimal location according to the considered reflection paradigm, while the terrestrial RSS is assumed 
\textcolor{black}{to be midway} between Tx and Rx\footnote{This assumption for the terrestrial RSS is justified by the fact that the latter cannot be 
\textcolor{black}{moved later to another location} given that Tx and Rx locations may change.}.
The operating frequency $f={\lambda}/{c}=30$ GHz (Ka-band), where $c$ is the light's velocity in m/s, the receive antenna gain $G_r=1$ (i.e., 0 dBi), and the path-loss $\alpha=4$ for the terrestrial communication.
Finally, the following transmit/receive parameters are set as follows for the log-distance channel model: transmit power $P_t=40$ dBm and transmit antenna gain $G_t=1$ (i.e., 0 dBi). For the 3GPP model, $P_t$ and $G_t$ are fixed according to the related standards. Specifically, for the terrestrial and UAV systems, we assume that $P_t=35$ dBm and $G_t=8$ dBi \cite{3gpp38901study}, while for the HAPS and LEO systems, we set $P_t=33$ dBm and $G_t=43.2$ dBi \cite{3gpp2017Technical}. 

\begin{figure}[t]
	\centering
	\includegraphics[width=0.9\linewidth]{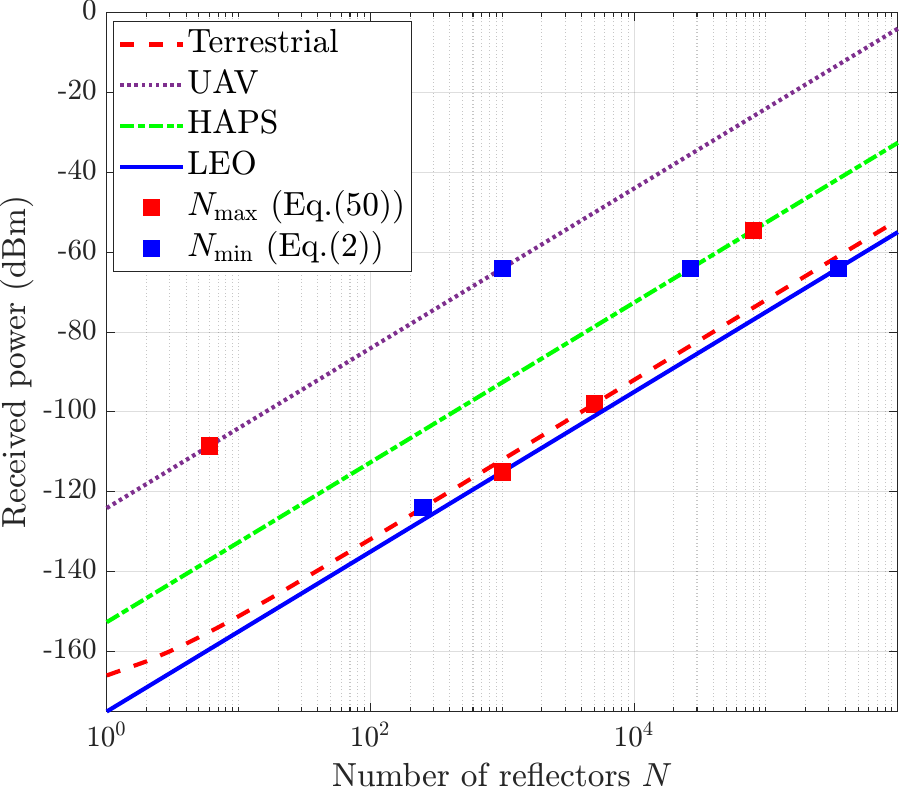}
	\caption{Received power vs. number of reflectors (different platforms; specular reflection paradigm).}
	\label{Fig:specular}
\end{figure}

Given the specular reflection paradigm, Fig. \ref{Fig:specular} presents the received power as a function of the number of reflectors for different platforms.
We notice that for UAVs and LEO satellites, the minimum required number of reflectors for specular reflection is greater than the maximum number of reflectors that can be placed on the platform's surface. Therefore, specular reflection cannot be achieved for RSS-equipped UAVs and LEO satellites for the specified coverage areas. This is due to the 
\textcolor{black}{limited area available for RSS on UAVs}
and to the relatively high operating altitude of LEO satellites.  
However, specular reflection can be realized using RSS-equipped HAPS systems or terrestrial environments. 
\textcolor{black}{This is because of the clear LoS links with HAPS nodes and the relatively short}
communication distances in terrestrial environments. 
\textcolor{black}{As we can also see in Fig. \ref{Fig:specular},} $N_{\min}=27,000$ reflectors for HAPS provides coverage in a 100 km circular area at $P_r=-64$ dBm. For a smaller coverage footprint, it is expected that a lower number of reflectors would be used.  Although a small number of reflectors is required in  terrestrial environment for  specular reflection, the received power is worse than in the HAPS scenario.
This is due to degraded terrestrial communication channels, compared to LoS wireless links for the RSS-equipped HAPS.
Accordingly, a HAPS is the preferred RSS mounting platform in the specular reflection paradigm.

\begin{figure}
    \centering
    \includegraphics[width=0.9\linewidth]{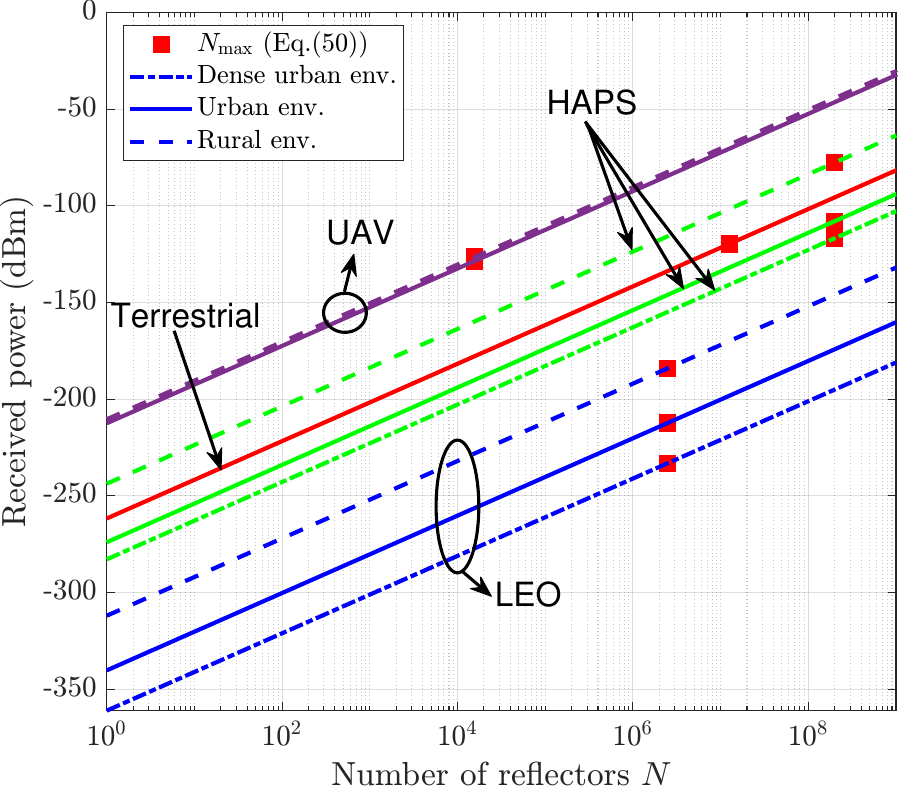}
    \caption{Received power vs. number of reflectors (different platforms and environment types; scattering reflection paradigm).}
    \label{fig:3GPP_PR_50_dry}
\end{figure}

For the scattering reflection paradigm, we present in Fig. \ref{fig:3GPP_PR_50_dry} the received power as a function of the number of reflectors for different platforms. 
\textcolor{black}{For the sake of simplicity,} we show only the results corresponding to the 3GPP channel models where different types of environments are studied, namely dense urban, urban, and rural. Given the same number of reflectors, the RSS-equipped UAV system achieves the best power performance due to short distances and clear LoS wireless links. Moreover, 
\textcolor{black}{the difference in performance between}
the rural and urban environments 
\textcolor{black}{is minimal, at about 2 dB.}
For a number of reflectors close or equal to $N_{\max}$,
the LEO system has the worst received power performance, 
\textcolor{black}{which degrades significantly in accordance with the density of the urban environment.}
In contrast, the HAPS system realizes the best $P_r$ values for a number of reflectors near or equal to $N_{\max}$. 
\textcolor{black}{This remains the case even in dense urban environments, where a HAPS system may compensate for performance degradation by using a higher number of RSS reflectors or by reducing its coverage footprint (i.e., where the Tx and Rx are closer).}
Consequently, 
\textcolor{black}{it is worth noting that RSS-equipped LEO systems}
may require further redesigning in order to be feasible, while the HAPS coverage footprint needs to be adjusted according to the environment type of the served area.

\subsection{Impact of the Carrier Frequency}
As shown in Table \ref{Tab1}, the carrier frequency is a dominant parameter in the link budget analysis. Although using higher frequencies enables high capacity links and addresses the spectrum scarcity issues, 
\textcolor{black}{such frequencies as millimeter wave and Terahertz may suffer from significant signal attenuation.}
Nevertheless, 
\textcolor{black}{high frequencies enable the use of small-sized reflectors. And since a large number of these can be used in small areas, this may counterbalance the signal attenuation.}

\begin{figure}[t]
	\centering
	\includegraphics[width=0.9\linewidth]{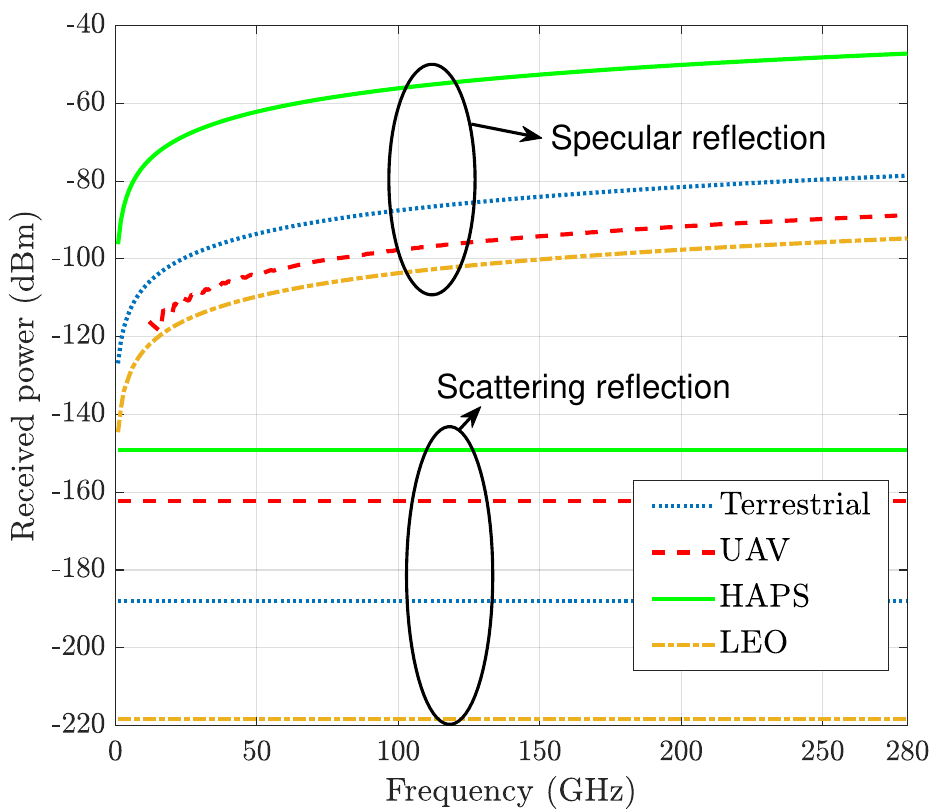}
	\caption{Received power vs. carrier frequency (different platforms).}
	\label{Fig:freq_realtion_combined}
\end{figure}

In Fig. \ref{Fig:freq_realtion_combined}, we evaluate the received power as a function of the carrier frequency, for the same communication model introduced previously. Here, we assume that each platform hosts the maximal number of reflectors $N_{\max}$, and that the link budget analysis for the scattering reflection paradigm is realized for the log-distance channel model 
\textcolor{black}{for the sake of comparison.}
The results show that in both specular and scattering reflection paradigms, the HAPS system provides the best performance due to its large surface area that accommodates the highest number of RSS reflectors. On the other hand, the RSS-equipped UAV performs worse than the terrestrial RSS under the specular reflection paradigm because of its small surface area, which accommodates a very small number of reflectors. In particular, when $f \leq 12$ GHz, a UAV with dimensions of $0.25 \times 0.25$ m$^2$, cannot host a single reflector. Given the scattering reflection paradigm, all platforms demonstrate stable performance for any carrier frequency. Indeed, by combining (\ref{eq:Nmax}) into (\ref{Eq:terres_scattering}) and (\ref{Eq:aerial_scattering}) respectively, we obtain the maximal received power, expressed by
\begin{eqnarray}
    \label{eq:terres_max}
    P_r^{\max}&=&P_t G_t G_r \left( \dfrac{\lambda}{4 \pi} \right)^4 \dfrac{A_t^2}{\left(c_1 c_2 \lambda^2\right)^2 \left( d_t d_r \right)^\alpha}\nonumber \\
    &=& \dfrac{P_t G_t G_r}{\left(4 \pi\right)^4} \left(\dfrac{A_t}{c_1 c_2 }\right)^2 \dfrac{1}{\left(d_t d_r \right)^\alpha},
\end{eqnarray}
for the terrestrial environment.  However, for the non-terrestrial one, (if $d\leq H_{RSS}$) $P_r^{\max}$ is expressed as
\begin{eqnarray}
    \label{eq:aer_max}
    P_r^{\max}&=&P_t G_t G_r \left( \dfrac{\lambda}{4 \pi} \right)^4 \dfrac{A_t^2}{\left(c_1 c_2 \lambda^2\right)^2 \left( H_{RSS}^2+d^2 \right)^2} \\
    &=& \dfrac{P_t G_t G_r}{\left(4 \pi\right)^4} \left(\dfrac{A_t}{c_1 c_2 }\right)^2 \dfrac{1}{\left(H_{RSS}^2 +d^2 \right)^2},\; 
    \nonumber
\end{eqnarray}
otherwise (i.e., $d>H_{RSS}$),
\begin{equation} \label{eq:aer_max2}
    P_r^{\max} = \dfrac{P_t G_t G_r}{\left(4 \pi\right)^4} \left(\dfrac{A_t}{c_1 c_2 }\right)^2 \dfrac{1}{\left(2H_{RSS}d\right)^2}.
\end{equation}
According to (\ref{eq:terres_max})--(\ref{eq:aer_max2}), the received power is no longer dependent on the frequency (or the wavelength), but rather on $A_t$, $c_1$ and $c_2$.


\begin{figure}
    \centering
    \includegraphics[width=0.9\linewidth]{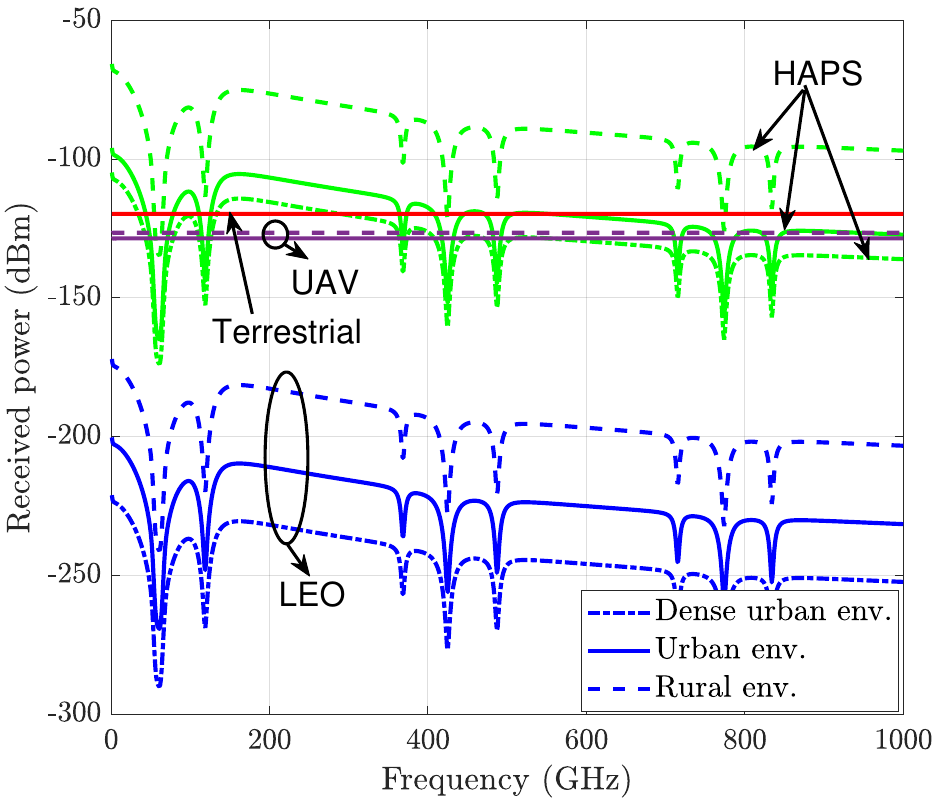}
    \caption{Received power vs. carrier frequency (different platforms and environment types).}
    \label{fig:3GPP_freq_20_dry}
\end{figure}

Fig. \ref{fig:3GPP_freq_20_dry} shows the received power as a function of the carrier frequency, given different RSS-equipped platforms and environment types. These results were obtained using the 3GPP channel models for the scattering reflection paradigm, i.e.,  (\ref{eq:terres_3GPP}), (\ref{Eq:UAV}) and (\ref{eq:HAPS_LEO_3GPP}). Also, we assume the use of the maximal number of reflectors $N_{\max}$, dry air atmospheric attenuation, and an average tropospheric scintillation of 0.5 dB for the HAPS and LEO systems. 
\textcolor{black}{We notice that the received power of both terrestrial and UAV systems are insensitive to frequency.}
Indeed, the frequency attenuation is successfully addressed through the deployment of a higher number of reflectors, since $N_{\max}$ increases with $f$. In contrast, HAPS and LEO systems performance is affected by frequency, and deeply by the atmospheric attenuation at specific frequency ranges. Nevertheless, some spectrum regions present a stable received power behaviour, such as below 40 GHz, and in the 150-350 GHz and 500-700 GHz bands. The latter can be fully exploited for high capacity communications. Finally, depending on the operating frequency band and environment (dense urban, urban, or rural), one  system may be more suitable than another (e.g., the RSS-equipped HAPS is the most interesting one in rural areas for most of the frequency bands).

\subsection{Data Rate Evaluation and Impact of Platform Location}

For the following simulations, we consider the same assumptions as in Fig. \ref{fig:3GPP_PR_50_dry}. 
Subsequently,  (\ref{eq:terres_3GPP}), (\ref{Eq:UAV}) and (\ref{eq:HAPS_LEO_3GPP}) can be used to evaluate the received power. The related data rate is calculated as 
\begin{equation}
\label{eq:data_rate}
    \mathcal{R} = B_w \log_2\left(1 + \frac{P_r}{P_N}\right),
\end{equation}
where 
$B_w$ denotes the bandwidth, $F$ stands for the noise figure, and $P_N$ refers to the noise power, given by
\begin{equation}
    P_N = K T B_w F,
\end{equation}
where $K= 1.38 \times 10^{-23} \rm{J}.^{\circ}\rm{K}^{-1}$ is the Boltzmann constant and $T$ is the temperature in $^{\circ}$K. 
According to \cite{3gpp2017Technical}, when aerial networks operate in frequency bands $f \geq 6$ GHz, $B_w$ can be up to $800$ MHz in both uplink and downlink, while $F=7$ dB.

\begin{figure}
    \centering
    \includegraphics[width=0.93\linewidth]{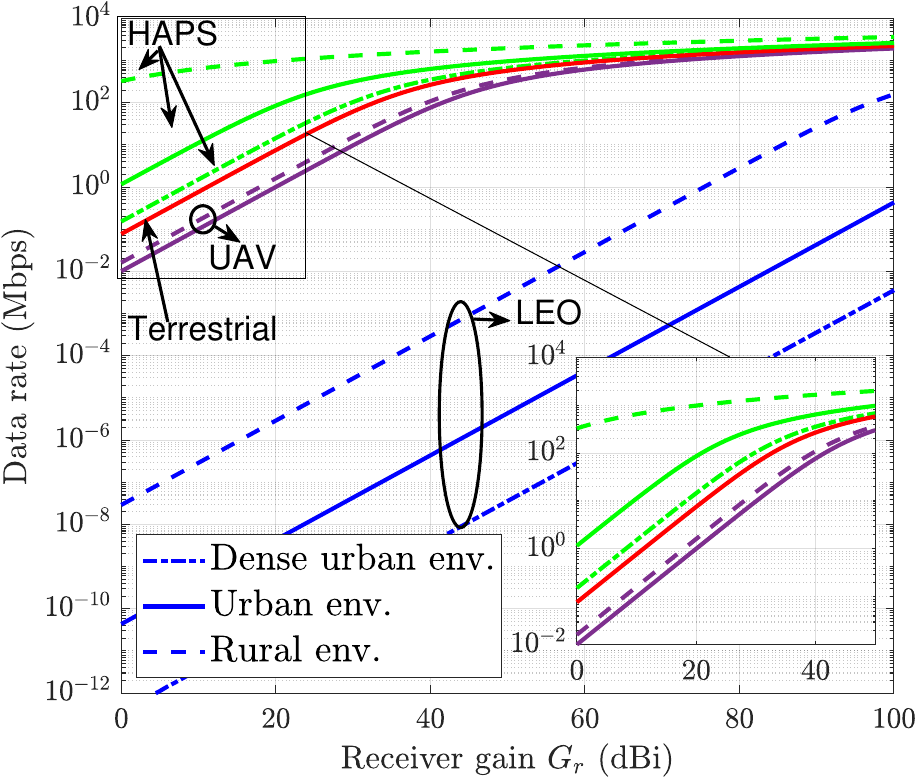}
    \caption{Data rate vs. receiver antenna gain.}
    \label{fig:3GPP_Rate_gain}
\end{figure}

Assuming that each platform uses $N_{\max}$ reflectors and that $B_w=100$ MHz, we depict in Fig. \ref{fig:3GPP_Rate_gain} the resulting data rates as a function of the receiver gain $G_r$. As typical UE has $G_r \leq 5$ dBi, 
\textcolor{black}{both terrestrial and HAPS systems equipped with RSS can}
directly support the downlink communication to users, and this is mainly due to their large RSS areas. However, RSS-equipped \textcolor{black}{UAVs} and LEO \textcolor{black}{satellites} are unable \textcolor{black}{to match this due to the small RSS areas of UAVs and the high propagation loss of LEO satellites.} Alternatively, mounting RSS over UAVs and LEO satellites may \textcolor{black}{support inter-UAV and inter-satellite} communications \cite{tekb2020reconfigurable}. \textcolor{black}{Also, a swarm of RSS-equipped UAVs can be utilized to assist communications cooperatively \cite{alfattani2020aerial,shang2021uav}.}

The platform placement has an important impact on the data rate performance. \textcolor{black}{Unlike LEO satellites, UAVs and HAPS nodes can be placed at fixed positions} above \textcolor{black}{an} intended coverage area. However, due to wind and turbulence, these platforms may drift from \textcolor{black}{their} initial position, thus \textcolor{black}{degrading communication performance.} To assess such an effect, we present in Fig. \ref{fig:placement_Rate} the data rate performance as a function of $\nu=2-\frac{r}{d}$, the normalized horizontal Rx-RSS distance. Moreover, we identify the optimal aerial platform location as calculated by (\ref{Eq:r_cases}) and the one provided by the simulations. Here, we assume that $G_r=0$ dBi, while the remaining parameters are as for Fig. \ref{fig:3GPP_Rate_gain}. 

As discussed \textcolor{black}{above}, the optimal placement \textcolor{black}{of an RSS-equipped aerial platform depends on} the latter's altitude $H_{RSS}$ and coverage radius $d$. For the RSS-equiped UAV, the obtained optimal UAV location $r^*$ that achieves the highest data rate (red dot) agrees with that of  (\ref{Eq:r_cases}) (blue circle), for any environment type. For the RSS-equipped HAPS, we distinguish between two cases, namely for $d=50$ km and $d=10$ km. In both cases, the optimal simulated HAPS locations and those of  (\ref{Eq:r_cases}) agree in the rural environment, but the latter drift away as the environment becomes urbanized. This is mainly due \textcolor{black}{to the effect of the additional shadowing and NLoS links in the 3GPP model} of urban environments, which were ignored in the calculation of  (\ref{Eq:r_cases}). Moreover, this location gap is larger for $d=10$ km due to a higher shadowing impact. For the terrestrial networks, the best location is either being the closest to Tx or Rx, with a preference for Tx. Indeed, since the BS (at altitude 25 m) has a strong LoS link towards the RSS, the received signal is slightly better than  being the closest to Rx (at altitude 1.5 m). 
\begin{figure}
    \centering
    \includegraphics[width=0.98\linewidth]{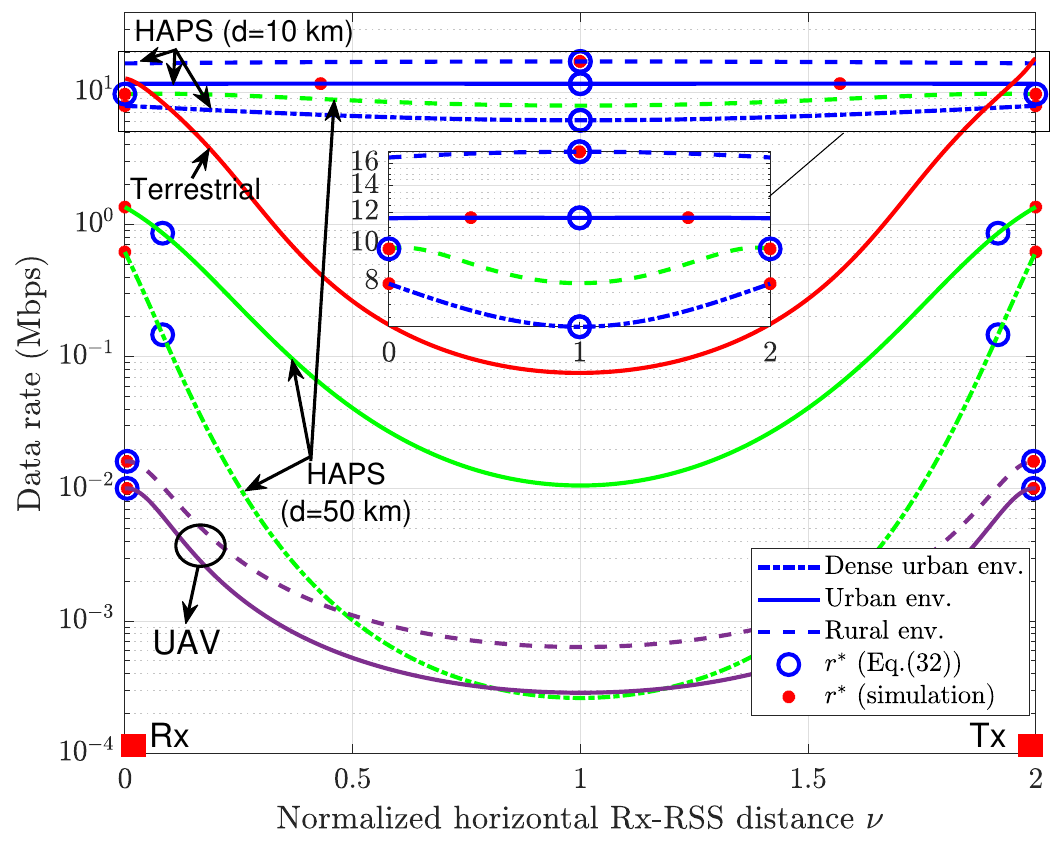}
    \caption{Data rate vs. normalized Rx-RSS distance $\nu$ (different platforms and environment types).}
    \label{fig:placement_Rate}
\end{figure}

When an RSS-equipped HAPS operates in a rural environment, drifting from its initial location would have a small impact on \textcolor{black}{communication performance}. However, the data rate significantly degrades \textcolor{black}{in an urban environment} when the HAPS moves towards the middle of the Tx-Rx segment. \textcolor{black}{A} terrestrial RSS achieves the highest performance at optimal locations. However, \textcolor{black}{due to the inflexibility of its deployment and varying Tx and Rx locations,} it may not perform at its best. In any case, when designed to assist cellular networks, \textcolor{black}{RSS deployed closest to} the BSs would eventually guarantee operating at near-optimal performance.

{\color{black}
\subsection{Outage Probability Analysis}
\textcolor{black}{Links between aerial platforms and terrestrial terminals may be subject to} random variations due to shadowing and blockages. The  instantaneous received power at a terrestrial terminal can be generally written  as
\begin{equation}
    P_r = \overline{P_r} + X,
\end{equation}
where $\overline{P_r}$ is the average received power, defined as 
\begin{equation}
    \overline{P_r} = P_t+G_t+G_r-(\overline{PL_{Tx-RSS}} + \overline{PL_{RSS-Rx}}) + 20\log(N),
\end{equation}
where  $\overline{PL_{Tx-RSS}}$ and  $\overline{PL_{RSS-Rx}}$ denote the average \textcolor{black}{path loss} of links between the RSS-equipped aerial platform and the transmitter and the receiver, respectively. Also, $X$ represents the resulting shadow fading of both links,
modeled as a zero-mean normal distribution with a standard deviation $\sigma_s = \sqrt{\sigma_{Tx-RSS}^2+\sigma_{RSS-Rx}^2}$.  
The  probability density function (pdf) of the received power can be written as
\begin{equation}
    f(P_r) = \frac{1}{\sigma_s \sqrt{2\pi}} \; {\rm exp}\left(- \frac{(P_r-\overline{P_r})^2}{2\sigma_s^2}\right),\; P_r \geq 0.
\end{equation}
Accordingly, the outage probability can be obtained from the cumulative distribution function (cdf) as
\begin{equation}\label{eq:CDF}
    \mathcal{P}_{\textrm{out}}=\mathbb{P}(P_r \leq x) =  \int_{0}^{x} f(P_r) \,dP_r = 1-\frac{1}{2} {\rm erfc}\left(\frac{x-\overline{P_r}}{\sigma_s \sqrt{2}}\right),
\end{equation}
where $x$ reflects the receiver sensitivity and ${\rm erfc}(x)= \frac{2}{\sqrt{\pi}} \int_x^\infty {\rm exp}(-t^2) \,dt$ is the complementary error function.




Using the 3GPP models \textcolor{black}{for the} scattering reflection paradigm (Table \ref{Tab1}) with the characteristics of aerial  platforms (Table \ref{tab:platforms_chr}) and same assumptions as in Fig. \ref{fig:3GPP_PR_50_dry}, we depict in Fig. \ref{fig:prob_pr} the cdf of the received power at user terminals with 0 dB gain, and assisted by RSS mounted on different aerial platforms in various environments. For any aerial platform, as the receiver's sensitivity degrades \textcolor{black}{(i.e., $x$ becomes larger)}, the outage probability increases. This performance degradation is more significant in \textcolor{black}{denser urban} environments due to higher shadowing loss.
Nevertheless, the outage performance gap between rural and urban \textcolor{black}{environments} is more noticeable for \textcolor{black}{a HAPS-assisted communication than for a UAV-assisted one.} 


\begin{figure}
    \centering
    \includegraphics[width=\linewidth]{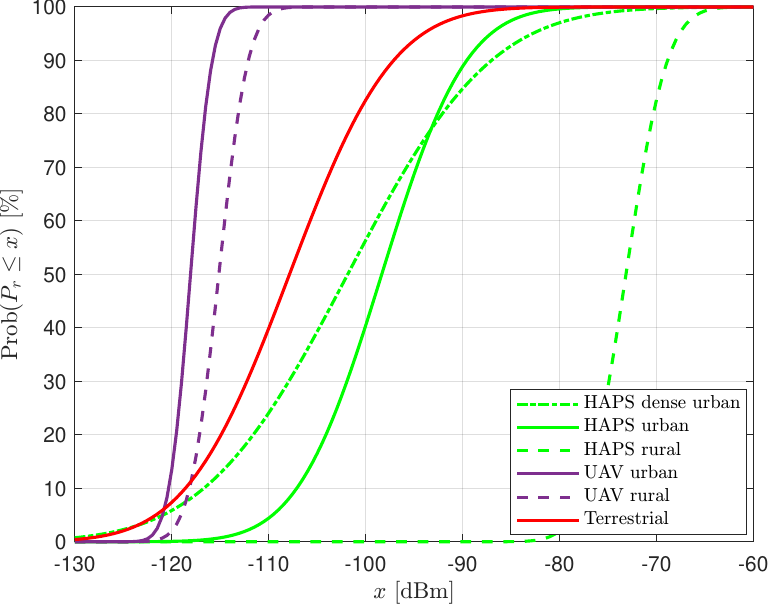}
    \caption{cdf of the received power at ground terminals assisted by RSS-equipped aerial platforms (different environment types).}
    \label{fig:prob_pr}
\end{figure}

The outage probability performance depends on the coverage area of the aerial platform.
To analyze the impact of the coverage area, we illustrate in Fig. \ref{fig:prob_outage_R} the outage performance as a function of the coverage radius. For these simulations, we assume a receiver power sensitivity \textcolor{black}{of} $P_{r}^{\rm th} = -115$ dBm. From Fig. \ref{fig:prob_outage_R}, an outage probability lower than $10 \%$ requires a coverage radius below 0.5 km for a terrestrial RSS and below 1 km for \textcolor{black}{a UAV-assisted communication,} respectively, whereas \textcolor{black}{a HAPS-assisted communication} can support an area \textcolor{black}{of} between 40 km and 80 km in radius ($\approx$ 80 times larger than for a terrestrial RSS). This demonstrates the high potential of deploying RSS on HAPS compared to the alternatives.   

\begin{figure}
    \centering
    \includegraphics[width=\linewidth]{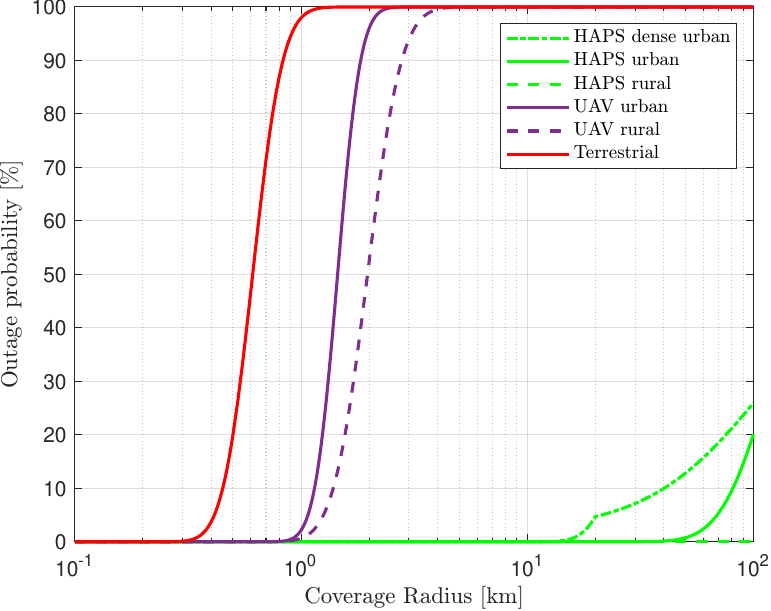}
    \caption{Outage probability vs. coverage radius for RSS-assisted communications (different environment types).}
    \label{fig:prob_outage_R}
\end{figure}

}

\section{Conclusion}\label{sec:conc}
In this paper, we conducted \textcolor{black}{a} link budget analysis for the envisioned RSS-equipped aerial platforms, namely UAVs, HAPS \textcolor{black}{nodes}, and LEO satellites. \textcolor{black}{In a review of the literature,} we identified two reflection paradigms: specular and scattering. For each reflection paradigm, we discussed its realization conditions \textcolor{black}{and then derived} the associated optimal RSS-equipped platform location that maximizes the received power. Numerical results \textcolor{black}{provided a number of insights into the design conditions} of RSS-assisted communications:
\begin{itemize}
    \item The scattering paradigm \textcolor{black}{has gained} more interest in the research community than the specular one due to \textcolor{black}{the} practical accuracy \textcolor{black}{of the former}.
    \item RSS-equipped HAPS presents superior performance in different types of environments, compared to \textcolor{black}{RSS-based} terrestrial and other aerial platforms communications.
    \item The received power performance is limited by the sizes of the RSS \textcolor{black}{area} and \textcolor{black}{number of} reflectors.
    \item When using the maximal number of reflectors \textcolor{black}{for an RSS}, the link budget of the scattering reflection paradigm becomes independent from the carrier frequency.
    \item Supporting ground users with RSS-equipped UAVs and \textcolor{black}{LEO satellites} might not be feasible. Nevertheless, RSS can be used to assist inter-UAV or inter-LEO communications. Also, a swarm of RSS-equipped UAVs can support terrestrial users.
    \item  The best RSS-equipped platform location  depends \textcolor{black}{greatly} on the operating altitude, coverage footprint, and environment type.
    \item {Finally, 
    unlike other RSS-assisted communications, 
    \textcolor{black}{the ones assisted by an RSS-equipped HAPS} sustain the best outage probability/coverage \textcolor{black}{performance} in any type of environment.
    }
    
\end{itemize}


\bibliographystyle{IEEEtran}
\bibliography{IEEEabrv,Final_accepted_version}

\end{document}